\documentclass[sigconf,nonacm]{acmart}

\usepackage{booktabs}
\usepackage{amsmath}
\usepackage{algorithm}
\usepackage{algpseudocode}
\usepackage{tikz}

\newcommand{\sysname}{TorchIsotonic}
\newcommand{\availability}{\footnote{\url{https://github.com/LenaBarretta/torchisotonic}; the
code of the experiments is in its \texttt{paper} folder.}}

\begin{document}

\title{Merge Hulls Without Copying Vertices: GPU Isotonic Regression}

\author{Elena Raikova}
\affiliation{%
  \institution{Independent researcher}
  \city{Buenos Aires}
  \country{Argentina}}
\email{lenabarretta@gmail.com}

\begin{abstract}
Isotonic regression finds the monotone sequence closest to the data in weighted least
squares. It is used to calibrate probabilities and inside differentiable ranking. Its standard
algorithm, pool adjacent violators, is sequential within a sequence, and torchsort, the exact
GPU solver of differentiable sorting libraries, runs it with one thread per sequence, which is
slow on long ones. The fit is the slope sequence of the lower
convex hull of cumulative sums, and hulls of sorted points can be merged in parallel. Merging
partial hulls level by level, however, copies the surviving vertices at every level, which
moves $\Theta(n \log n)$ data when most points survive. We store each chunk's hull once and
represent a merged hull as a list of segments of these chunk hulls, at most one per chunk; a
merge then updates the list. We prove the representation correct, bound the work and depth of
the merge phase, and measure a CUDA implementation on four NVIDIA GPUs against a control that
copies the surviving vertices with the whole GPU at every level. On dense hulls the merge
phase is 17 to 276 times faster with segment lists and the whole operator 3.9 to 19 times; on
noisy data, where few vertices survive, the difference is small. In a differentiable soft rank,
replacing torchsort's isotonic solver makes forward and backward 1.8 to 820 times faster, and
the training step of a compact model with a Spearman loss 1.5 to 40 times.
\end{abstract}

\ccsdesc[500]{Theory of computation~Parallel algorithms}
\ccsdesc[300]{Computing methodologies~Parallel computing methodologies}
\ccsdesc[300]{Mathematics of computing~Mathematical software}

\keywords{isotonic regression, convex hull, GPU, CUDA, parallel algorithms, PyTorch}

\maketitle

\section{Introduction}
Click models in online advertising are often trained on data in which the non-clicks were
downsampled, and their predicted probabilities are then too high~\cite{he2014}. Before the
predictions are used to price impressions, they are calibrated: a monotone curve, fitted to a
day of held-out traffic, maps each score to the click rate observed at that score. Isotonic
regression fits this curve~\cite{zadrozny2002,niculescu2005}. A day of traffic has millions of
scores, and calibration per campaign or per country needs hundreds of curves. The same
problem, the non-decreasing sequence closest to the data in weighted least squares, appears in
differentiable sorting and ranking~\cite{blondel2020}, sparse top-$k$
operators~\cite{sander2023} and distributional regression~\cite{henzi2022}, usually as a batch
of many rows.

The standard algorithm, pool adjacent violators, scans a sequence once and merges neighbouring
blocks while they violate the order. torchsort~\cite{torchsort}, the exact GPU solver used by
differentiable sorting libraries, runs it with one thread per sequence. This is fast for many
short rows. For one row of $10^7$ elements
such a solver takes $5.5$~s on a T4 and $6.4$~s on an A100, while SciPy takes $0.17$~s on one
CPU core. Iterative methods for accelerators~\cite{sander2023} are parallel but approximate.

The fit is also the slope sequence of the lower convex hull of the cumulative sum
diagram~\cite{barlow1972}. Lower hulls of points sorted by abscissa can be built in parallel:
split the points, build the hulls of the parts, and merge neighbouring hulls along the bridge
between them~\cite{preparata1977,overmars1981}, with logarithmic depth on a
PRAM~\cite{goodrich1987,berkman1996}. On a GPU the direct version merges partial hulls level
by level and copies the surviving vertices at every merge. On noisy data few vertices survive
and the copies are cheap. On data that are nearly monotone almost every point is a vertex, and
the merge tree moves $\Theta(n \log n)$ vertices.

In our representation each chunk of the input writes its hull's vertices once, and a merged
hull is a list of \emph{segments}, ranges of vertices of the chunk hulls with at most one per
chunk. A merge finds the bridge, shortens the two segments it touches, and appends the right
list to the left one (Section~\ref{sec:segment-lists}). Structures that merge hulls without
copying exist for pointer machines and PRAMs (Section~\ref{sec:related}); ours is a flat array
with one descriptor slot per chunk. The contributions are:
\begin{itemize}
\item \textbf{Segment-referenced hull merging} (Sections~\ref{sec:algorithm}
and~\ref{sec:correctness}): the data structure, the merge, and a proof that segment lists
represent every hull of the merge tree (Theorem~\ref{thm:segments}).
\item \textbf{Bounds} on the work and depth of the merge phase (Section~\ref{sec:analysis}),
for the algorithm and for the code as executed.
\item \textbf{A GPU implementation}, \sysname,\availability{} a weighted, differentiable PyTorch operator
for batches of rows of any common length, measured from rows of $16$ elements to a single row
of $10^8$ (Section~\ref{sec:implementation}). On our tests
its float64 fits agree with a float64 reference to $1.3 \cdot 10^{-10}$ and its float32 fits
to $1.2 \cdot 10^{-7}$, in the largest error relative to $1 + |z^\star_i|$
(Section~\ref{sec:numerics}).
\item \textbf{An evaluation} on T4, L4, A100 and H100 GPUs (Section~\ref{sec:evaluation}):
segment lists against two controls that copy the surviving vertices, ablations of the other
design choices, comparisons with a one-thread-per-row solver and with scikit-learn, a differentiable soft
rank and a training step that uses it, and the calibration of a click model on the Criteo
data.
\end{itemize}

\section{Background}
\label{sec:background}

\subsection{Isotonic regression as a lower convex hull}
\label{sec:gcm}

Given $y_1, \dots, y_n \in \mathbb{R}$ and weights $w_1, \dots, w_n > 0$, the (increasing)
isotonic regression problem is
\begin{equation}
  \min_{z_1 \le z_2 \le \dots \le z_n} \sum_{i=1}^{n} w_i (z_i - y_i)^2 .
  \label{eq:iso}
\end{equation}
Its solution $z^*$ is unique and piecewise constant: it consists of \emph{blocks} of
consecutive indices on which $z^*$ equals the weighted mean of $y$ over the block. The
decreasing problem is the increasing one for $-y$.

The classical geometric characterisation (Barlow et al.~\cite{barlow1972}; Robertson et
al.~\cite{robertson1988}, Theorem~1.2.1) is the one this paper computes with. Let the
\emph{cumulative sum diagram} be the points
\begin{equation}
  P_k = (W_k, S_k), \qquad W_k = \sum_{i \le k} w_i, \quad S_k = \sum_{i \le k} w_i y_i,
  \qquad k = 0, \dots, n,
\end{equation}
with $P_0 = (0, 0)$; their abscissae strictly increase since the weights are positive. The
greatest convex minorant of the diagram is its lower convex hull, and
\begin{equation}
  z^*_i = \text{the slope of the hull edge above } [W_{i-1}, W_i] .
  \label{eq:slope}
\end{equation}
Indeed, the slope of the edge from vertex $P_a$ to vertex $P_b$ equals
$(S_b - S_a)/(W_b - W_a)$, the weighted mean of $y$ over $(a, b]$, and the vertices are
exactly the block boundaries. The sequential pool-adjacent-violators algorithm
(PAVA) builds this hull as the monotone chain does: pooling two blocks removes a non-convex
vertex.

Throughout, we take a point lying on the segment between its neighbours \emph{not} to be a
vertex (collinear points are removed). This changes which points are called vertices but
not the slopes in~\eqref{eq:slope}, hence not $z^*$. The lane and chunk hulls and the walks
follow it; the bridge search of the merge levels may keep a collinear point
(Section~\ref{sec:correctness}), which changes no slope either.

\subsection{Parallel hulls of presorted points}
\label{sec:parallel-hulls}

The lower hull of $n$ points sorted by abscissa can be built in parallel with $O(n)$ work and
polylogarithmic depth by merging the hulls of neighbouring ranges of points
(Section~\ref{sec:related}). Two $x$-separated lower hulls merge along their \emph{bridge}, the
common supporting line below both. Preparata and Hong~\cite{preparata1977} find it by walking
from the facing ends, Overmars and van Leeuwen~\cite{overmars1981} by a simultaneous binary
search over both hulls in $O(\log n)$ steps, and G{\ae}de et al.~\cite{gaede2024} reduce that
search to three cases on edges. We apply these merges to the cumulative sum diagram on a GPU.

\section{The Algorithm}
\label{sec:algorithm}
The input is a batch of rows, each an independent problem with its cumulative sum diagram
(Section~\ref{sec:gcm}). The algorithm works at three levels. A \emph{lane} of a warp builds
the hull of 32 consecutive points. The lanes of a \emph{chunk} (one to four warps, $C = 1024$
to $4096$ points) merge their hulls in shared memory. The chunks of a row merge in a tree, in
global memory, as lists of segments. A row that fits in one chunk is finished at the second
level, and rows of up to 512 elements are packed several to a warp.

\subsection{Lane hulls in a bit mask}
Each lane takes 32 consecutive elements, forms their cumulative sums in shared memory, and runs
the monotone chain over its points. The stack of the chain is a 32-bit mask in a register, with
bit $k$ set while point $k$ is on the stack. The top and the entry under it are kept in
registers, and the entry under those is found with a count-leading-zeros instruction on the
mask. A pop clears a bit and a push sets one. Every iteration either pops or pushes and issues
both loads it might need, so the loop has no data-dependent branch and the 32 lanes of a warp
stay in step. When the chain ends, the set bits are the lane's hull.

\subsection{Merging lane hulls by walking}
The lane hulls of a chunk merge pairwise in rounds, five within a warp and then across warps.
Neighbouring hulls share their boundary point. The first lane of a merge tests whether the
chain is convex at the shared point. If it is not, the lane walks from the point's two
neighbours, moving the left end left and the right end right until the line between them
supports both hulls; this line is the bridge. Every lane of the merge then clears the mask bits
strictly between the bridge's ends. Each step of a walk except the last of each pass removes a
vertex, and a vertex is removed only once, so all the walks of a chunk take $O(C)$ steps
(Lemma~\ref{lem:walk}). The chunk's hull is then the set bits of its lanes' masks.

\subsection{Chunk hulls as segment lists}
\label{sec:segment-lists}
A chunk that is not a whole row writes its hull to a slot of $C + 1$ entries of a global
vertex array. The vertices are ranked by a prefix sum of the lanes' mask counts and stored as
coordinates relative to the chunk's first point, with the point's index. These entries are
written once; the merges above change only descriptors. The chunk's \emph{segment list} is one
descriptor $(j, \ell, h, v)$ of 24 bytes: the chunk $j$, its first and last surviving vertex
$\ell$ and $h$, and the slot $v$ that holds its vertices. A node of the merge tree spanning
chunks $j_0, \dots, j_1$ is the list of the segments of its hull, at most one per chunk
(Theorem~\ref{thm:segments}). The list is stored in the descriptor slots of the node's chunks,
starting at $j_0$'s (Figure~\ref{fig:segments}).

\begin{algorithm}[t]
\caption{Merging the segment lists $L$ and $R$ of two neighbouring nodes.}
\label{alg:merge}
\begin{algorithmic}[1]
\State $s \gets$ the last vertex of $L$ (the first of $R$)
\If{the chain is convex at $s$}
  \State append $R$, without $s$, to $L$ \Comment{no bridge}
\Else
  \State $(a, b) \gets$ walk from $s$'s neighbours, $\le T$ moves \Comment{thread levels only}
  \If{there was no walk, or it gave up} \Comment{bisection}
    \State $a \gets$ first vertex of $L$ above its tangent to $R$
    \State $b \gets$ where that tangent touches $R$
  \EndIf
  \State end $L$ at $a$: cut its segment, drop the later ones
  \State start $R$ at $b$: cut its segment, drop the earlier ones
  \State move $R$'s descriptors to follow $L$'s
\EndIf
\end{algorithmic}
\end{algorithm}

Level $r$ of the tree merges pairs of nodes of $2^r$ chunks (Algorithm~\ref{alg:merge}). A
vertex is addressed by its segment and its position in the chunk's slot. Its coordinates in the
row are its chunk-relative ones plus the chunk's start, which comes from a prefix sum over the
chunks' totals. The bridge is found by walking, as within the chunks, now across segment
boundaries. A walk that would take more than $T = 64$ steps is replaced by a search: a
bisection over the candidate vertices of $L$, each tested against its tangent to $R$, both
bisecting first over segments and then within one. The merge then cuts two segments and moves
the descriptors of $R$'s tail down by at most $2^r$ places. A level with many merges runs one
thread per merge, since its lists are short and a walk is serial. A level with few merges runs
one warp per merge, which searches at once with 32 candidates per step and copies descriptors
in coalesced rounds.

\subsection{Reconstruction}
At the root, the list holds the row's hull. One kernel finds, for every chunk, the range of
its surviving vertices and the slopes of the edges into and out of it, by a bisection over the
root list. A chunk without a segment lies under a single edge. A second kernel assigns each of
the chunk's elements the slope of the edge above it, by a binary search in the span's
vertices, and writes the fit.

\subsection{Short rows}
Rows of up to 512 elements are packed several to a warp. A row takes the smallest power of two
$L$ of lanes with $32L \ge n$, and these lanes run the lane and chunk levels on their own, with
the prefix sums and merges restricted to the row. The whole row stays in shared memory and
is finished in one kernel.

\begin{figure}[t]
\centering
\begin{tikzpicture}[x=0.33cm, y=0.33cm, font=\scriptsize,
  v/.style={draw, minimum size=0.3cm, inner sep=0pt},
  live/.style={v, fill=black!55},
  dead/.style={v, fill=black!8},
  copied/.style={v, fill=black!30},
  seg/.style={draw, rounded corners=1pt, inner sep=1.2pt, fill=white}]
  \node[anchor=west] at (-0.5, 9.9) {\textbf{(a)} merging by copying};
  \foreach \j/\n in {0/6, 1/5, 2/6, 3/4} {
    \foreach \k in {1,...,\n} \node[live] at (\j*6.2 + \k*0.85 - 0.85, 8) {};
    \node at (\j*6.2 + 1.6, 8.9) {chunk \j};
  }
  \foreach \k in {1,...,7} \node[copied] at (\k*0.85 + 1.2, 5.6) {};
  \foreach \k in {1,...,8} \node[copied] at (\k*0.85 + 13.2, 5.6) {};
  \foreach \k in {1,...,8} \node[copied] at (\k*0.85 + 6.9, 4.0) {};
  \draw[->, thin] (2.2, 7.5) -- (3.4, 6.1);  \draw[->, thin] (7.8, 7.5) -- (6.0, 6.1);
  \draw[->, thin] (14.6, 7.5) -- (15.4, 6.1); \draw[->, thin] (20.0, 7.5) -- (18.8, 6.1);
  \draw[->, thin] (4.6, 5.1) -- (8.4, 4.5);   \draw[->, thin] (16.4, 5.1) -- (12.6, 4.5);
  \node[anchor=west] at (22.4, 5.6) {level 1};
  \node[anchor=west] at (22.4, 4.0) {level 2};
  \node[anchor=west] at (-0.5, 2.4) {\textbf{(b)} merging segment lists};
  \foreach \j/\n/\lo/\hi in {0/6/0/3, 1/5/2/3, 2/6/9/9, 3/4/2/3} {
    \foreach \k in {0,...,5} {
      \ifnum\k<\n
        \pgfmathtruncatemacro{\on}{(\k>=\lo) && (\k<=\hi) ? 1 : 0}
        \ifnum\on=1 \node[live] at (\j*6.2 + \k*0.85, 1.2) {};
        \else \node[dead] at (\j*6.2 + \k*0.85, 1.2) {};
        \fi
      \fi
    }
    \node at (\j*6.2 + 1.6, 0.2) {chunk \j{} (fixed)};
  }
  \node[seg, anchor=west] at (-0.5, -1.0) {$(0, 0, 5)$};
  \node[seg, anchor=west] at (5.7, -1.0) {$(1, 0, 4)$};
  \node[seg, anchor=west] at (11.9, -1.0) {$(2, 0, 5)$};
  \node[seg, anchor=west] at (18.1, -1.0) {$(3, 0, 3)$};
  \node[seg, anchor=west] at (-0.5, -2.4) {$(0, 0, 3)\ (1, 2, 4)$};
  \node[seg, anchor=west] at (11.9, -2.4) {$(2, 0, 4)\ (3, 1, 3)$};
  \node[seg, anchor=west] at (-0.5, -3.8) {$(0, 0, 3)\ (1, 2, 3)\ (3, 2, 3)$};
  \node[anchor=west] at (22.4, -1.0) {level 0};
  \node[anchor=west] at (22.4, -2.4) {level 1};
  \node[anchor=west] at (22.4, -3.8) {level 2};
\end{tikzpicture}
\caption{Merging four chunk hulls. (a) Every merge copies the surviving vertices into a new
array, so a vertex that survives $k$ levels is written $k + 1$ times. (b) Each chunk writes
its vertices once; a node is a list of segments (chunk, first vertex, last vertex), and a
merge cuts the two segments its bridge lands in and moves the right list's descriptors. At
level 2 the bridge passes over chunk 2, which keeps no segment. Dark: vertices of the final
hull.}
\label{fig:segments}
\end{figure}

\section{Correctness}
\label{sec:correctness}

All statements are in exact arithmetic; rounding is the subject of
Section~\ref{sec:numerics}. Let $P_0, \dots, P_n$ be the cumulative sum diagram
(Section~\ref{sec:gcm}), split into \emph{chunks}: chunk $j$ covers the points
$P_{c_j}, \dots, P_{c_{j+1}}$, so neighbouring chunks share a point. Within a chunk the same
holds for its lanes. We write $H(Q)$ for the lower hull of a point set $Q$, as its vertex
sequence by increasing abscissa.

\begin{lemma}[filtering]
\label{lem:filter}
Let $Q \subseteq \{P_0, \dots, P_n\}$. If $P_k \in Q$ is a vertex of $H(\{P_0, \dots, P_n\})$
then $P_k$ is a vertex of $H(Q)$.
\end{lemma}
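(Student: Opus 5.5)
The plan is to use a characterisation of hull vertices by strict supporting lines, and then observe that such a line survives passing to the subset $Q$. Since the abscissae $W_0 < W_1 < \dots < W_n$ are distinct, a point $P_k$ of a finite set $X$ sorted by abscissa is a vertex of $H(X)$, with collinear points excluded as in Section~\ref{sec:gcm}, if and only if there is a non-vertical line through $P_k$ such that every other point of $X$ lies strictly above it. The first step is to prove this equivalence once for an arbitrary $X$. Both the hypothesis on $\{P_0,\dots,P_n\}$ and the conclusion on $Q$ are then instances of it.

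For the equivalence, suppose first that $P_k$ is a vertex of the lower hull, meaning a strict corner. Take the line through $P_k$ whose slope lies strictly between the slopes of the hull edges into and out of $P_k$. At an endpoint of the hull, take a slope strictly smaller than the outgoing slope, or strictly larger than the incoming slope. Convexity of the lower hull together with strictness at the corner then puts every other point of $X$ strictly above this line. Conversely, suppose such a line $\lambda$ exists. If $P_k$ lay on or above the segment between two points $P_a, P_b \in X$ with $a < k < b$, then $\lambda(P_k)$ would be a convex combination of $\lambda(P_a)$ and $\lambda(P_b)$, both of which are strictly smaller than the corresponding ordinates $S_a$ and $S_b$. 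That would force $S_k > \lambda(W_k)$, a contradiction. So $P_k$ lies strictly below every chord spanning it, which is exactly the statement that it is a vertex (a non-removed hull point).

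With the equivalence in hand, the lemma follows directly. Since $P_k$ is a vertex of $H(\{P_0,\dots,P_n\})$, there is a line through $P_k$ with all other $P_i$ strictly above it. Then $Q \setminus \{P_k\} \subseteq \{P_0,\dots,P_n\} \setminus \{P_k\}$ also lies strictly above that line, so by the reverse direction $P_k$ is a vertex of $H(Q)$.

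The main obstacle is making the definition of ``vertex'' precise under the convention of removing collinear points, including the degenerate cases of endpoints and of $|Q| \le 2$. I would fix the definition as ``lies strictly below every chord of $X$ spanning its abscissa'', with endpoints vacuously vertices. Under that definition the lemma can be proved directly without supporting lines. Any chord of $Q$ spanning $P_k$ is a chord of $\{P_0,\dots,P_n\}$, so $P_k$ lies strictly below it. This is the version I would actually write, keeping the supporting-line picture only as intuition.
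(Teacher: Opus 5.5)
Your proposal is correct, and its supporting-line argument is the paper's proof. The paper characterises a vertex by a non-vertical line through $P_k$ with every point of the set on or above it and no points of the set on the line on both sides of $P_k$. It then notes that the witness for $\{P_0,\dots,P_n\}$ also witnesses $Q$. Your strict line is equivalent to that one (tilt the paper's line slightly off its contacts on one side), and the passage to $Q$ is the same one-line observation.

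The chord version you say you would actually write is the dual certificate. Instead of carrying a supporting line down to $Q$, you observe that any chord of $Q$ that would disqualify $P_k$ is already a chord of the full diagram. It constructs no line, and it builds the collinear convention into the definition, since a collinear middle point lies on a spanning chord; the paper needs its clause about contact on both sides for that.

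What it still owes is the identification of \emph{strictly below every spanning chord} with being a vertex of $H(X)$. You assert this rather than prove it, and your supporting-line route relies on it too, at the words \emph{exactly the statement that it is a vertex}. One observation closes it in both routes. The lower hull consists of edges between points of $X$, so its height at $W_k$ is the smaller of $S_k$ and the heights at $W_k$ of the chords $P_aP_b$ of $X$ with $W_a < W_k < W_b$. Hence a point strictly below all such chords lies on the hull and in the interior of no edge, which is what being a vertex means here.
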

\begin{proof}
$P_k$ is a vertex of the lower hull of a set exactly when some non-vertical line through
$P_k$ has every point of the set on or above it, with no two points of the set on the line on
both sides of $P_k$. The line that witnesses this for the whole diagram witnesses it for the
subset $Q$.
\end{proof}

Hence a point that is not a vertex of its lane's hull, or of its chunk's, is not a vertex of
the diagram's hull, and every stage may discard the non-vertices of its own level. The end
points of a lane, chunk or node are always vertices of its hull, since they are extreme in
$x$.

\begin{lemma}[merging, \cite{preparata1977,overmars1981}]
\label{lem:merge}
Let $A$ and $B$ be point sets with $x(a) \le x_s \le x(b)$ for all $a \in A$, $b \in B$,
sharing the point $s$ with $x(s) = x_s$, which is the last vertex of $H(A)$ and the first of
$H(B)$. Then $H(A \cup B) = H(A)[0..\alpha] \,\|\, H(B)[\beta..]$ for a unique bridge
$(H(A)[\alpha], H(B)[\beta])$: if the chain is convex at $s$ (the slope into $s$ along $H(A)$
is below the slope out of $s$ along $H(B)$) then $\alpha$ and $\beta$ both denote $s$, which
is kept once; otherwise $s$ is not a vertex of $H(A \cup B)$, $\alpha$ precedes $s$ and
$\beta$ follows it, and the vertices of $H(A)$ after $\alpha$ and of $H(B)$ before $\beta$
lie on or above the bridge.
\end{lemma}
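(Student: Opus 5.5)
We reduce the statement to the filtering lemma and to the uniqueness of supporting lines. Set $U = A \cup B$ and $V = H(U)$.

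\emph{Which vertices of $V$ come from $H(A)$ and $H(B)$.} Every vertex of $V$ lies in $A$ or in $B$. By Lemma~\ref{lem:filter}, applied once with $Q = A$ and once with $Q = B$, a vertex of $V$ in $A$ is a vertex of $H(A)$, and one in $B$ is a vertex of $H(B)$. Since $V$ is ordered by abscissa and $A$ lies weakly left of $x_s$ and $B$ weakly right, $V$ is a run of vertices of $H(A)$ followed by a run of vertices of $H(B)$. The point $s$ is shared and can appear only once.

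\emph{The two runs are a prefix and a suffix.} Let $p$ be the last vertex of $V$ in $A$ and $q$ the first in $B$; they coincide when $s \in V$. Take any vertex $u$ of $H(A)$ before $p$. I claim $u \in V$. Suppose not. Then $u$ lies strictly above the edge of $V$ that spans $x(u)$. That edge joins two points of $A$, both to the left of $p$ or at $p$, since $x(u) < x(p)$ and every vertex of $V$ up to $p$ is in $A$. Its endpoints are points of $A$, so $u$ lies strictly above a chord of $A$. This contradicts the convexity of $H(A)$ only if the endpoints are vertices of $H(A)$, and they are by the first step. Hence the $A$-part of $V$ is exactly $H(A)[0..\alpha]$ with $H(A)[\alpha] = p$. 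Symmetrically, the $B$-part is $H(B)[\beta..]$ with $H(B)[\beta] = q$. This gives the decomposition.

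\emph{Uniqueness and the bridge property.} $V$ is determined by $U$, so $\alpha$ and $\beta$ are unique. The edge $pq$ is an edge of the lower hull, so its line supports $U$. In particular, every vertex of $H(A)$ after $\alpha$ and every vertex of $H(B)$ before $\beta$ lies on or above it.

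\emph{The case split at $s$.} This is the step I expect to be the main obstacle, because collinear and degenerate configurations must be handled carefully. Let $\sigma_A$ be the slope into $s$ along $H(A)$ and $\sigma_B$ the slope out of $s$ along $H(B)$.

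If $\sigma_A < \sigma_B$, take a line through $s$ with slope strictly between them. By convexity of each hull it has all of $A$ on or above it, and likewise all of $B$. Its only contact point is $s$, because the two adjacent edges leave it strictly. By the witness criterion from the proof of Lemma~\ref{lem:filter}, $s$ is a vertex of $V$, so $\alpha$ and $\beta$ both denote $s$, kept once.

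If $\sigma_A \ge \sigma_B$, any supporting line of $U$ through $s$ must have slope at most $\sigma_B$ to stay below $B$, and at least $\sigma_A$ to stay below $A$. So its slope is forced to equal $\sigma_A = \sigma_B$, and then the neighbours of $s$ lie on the line on both sides of $s$. Either way, by the criterion, $s$ is not a vertex of $V$, so $\alpha$ strictly precedes $s$ and $\beta$ strictly follows it.

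The end cases need separate checks: $A = \{s\}$ or $B = \{s\}$, where one of the slopes is undefined. With the collinearity convention of Section~\ref{sec:gcm}, the equal-slope case is exactly where $s$ is collinear and is removed. I would verify these boundaries explicitly.
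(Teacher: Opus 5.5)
The paper gives no proof of this lemma; it cites it from Preparata--Hong and Overmars--van Leeuwen. There the bridge is constructed first, as the common tangent of the two chains, and the merged hull is read off from it. You go the other way. You start from $H(A \cup B)$ and use Lemma~\ref{lem:filter} to see that it is a run of vertices of $H(A)$ followed by a run of vertices of $H(B)$. You then show that these runs are a prefix and a suffix, take the last $A$-vertex $p$ and first $B$-vertex $q$ as the bridge, and decide whether $s$ survives by the witness-line criterion applied to $\sigma_A$ and $\sigma_B$. This is correct and uses only the paper's own tools. It also shows explicitly that the equal-slope case drops $s$, as the collinear convention requires.

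Two small repairs are needed.
\begin{itemize}
\item Lemma~\ref{lem:filter} is stated with the whole diagram as ambient set, and you use it with ambient set $A \cup B$. Its witness-line proof works unchanged, but say so.
\item In the prefix step, a point $u$ that is not a vertex of $H(A \cup B)$ may lie \emph{on} the edge over $x(u)$ rather than strictly above it. You can handle both cases at once. A line through $u$ with that edge's end points (points of $A$ on either side of $u$) on or above it exists only if $u$ lies on the edge. In that case it is the edge's line, which has points of $A$ on both sides of $u$, so by the witness criterion $u$ is not a vertex of $H(A)$. Your remark that the end points must be vertices of $H(A)$ is then unnecessary.
\end{itemize}
The cases $A = \{s\}$ and $B = \{s\}$ need no separate treatment, since the paper notes that every node has at least two vertices.

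What the tangent-first route gives, and yours does not directly, is the converse that the paper uses right after the lemma (``the fixed point supports both chains, so it is the bridge''). In the non-convex case, take a vertex of $H(A)$ before $s$ and a vertex of $H(B)$ after $s$ whose line supports both chains; then both lie on the line of $pq$. This takes one more step in your framework. Such a line supports $A \cup B$ and passes through two of its points on either side of $x_s$. So the lower hull coincides with it on an interval around $x_s$, where the hull is also the edge $pq$.
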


The implementation finds the bridge by walking from $s$: $\alpha$ moves left while the turn
$(H(A)[\alpha-1], H(A)[\alpha], H(B)[\beta])$ is not strictly left, $\beta$ moves right
symmetrically, alternating until neither moves. The fixed point supports both chains, so it
is the bridge. Every step except the last of each pass removes a vertex, and a vertex is
removed once over all merges of a node, which bounds the walks of a node by $O$(its
vertices) in total (Section~\ref{sec:analysis}).

\begin{definition}[segment lists]
A \emph{segment} $(j, \ell, h)$, $\ell \le h$, denotes the vertices $\ell..h$ of chunk $j$'s
hull $H_j$. A \emph{segment list} is a sequence of segments with strictly increasing chunk
indices; it denotes the concatenation of its segments' vertices.
\end{definition}

\begin{theorem}[segment representation]
\label{thm:segments}
For every node of the merge tree, the vertex sequence of the hull of the node's points, without
collinear middle points (Section~\ref{sec:gcm}), equals the concatenation of its segment list,
and each segment is a contiguous range of one chunk hull's vertices. The list of a node of $c$
chunks has at most $c$ segments.
\end{theorem}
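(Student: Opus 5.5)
We argue by induction on the level of the node in the merge tree, proving the stronger invariant that the node's list represents its hull and is stored with strictly increasing chunk indices, all lying in the node's range $j_0,\dots,j_1$.

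\textbf{Base case.} A leaf is a single chunk $j$. Its list is the one descriptor $(j,0,|H_j|-1)$. By construction of the lane and chunk levels, the slot holds $H_j$ without collinear middle points. The lane walks and mask clearing follow the convention of Section~\ref{sec:gcm}, and Lemma~\ref{lem:merge} applied within the chunk gives that the set bits are exactly $H_j$. So the invariant holds with one segment for one chunk.

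\textbf{Inductive step.} Take a node formed from children $L$ (chunks $j_0..m$) and $R$ (chunks $m+1..j_1$), sharing the point $s=P_{c_{m+1}}$. By induction, the concatenation of $L$ is $H(A)$ and that of $R$ is $H(B)$, where $A$ and $B$ are the children's point sets. The point $s$ is extreme in $x$ in both, hence the last vertex of $H(A)$ and the first of $H(B)$. Lemma~\ref{lem:merge} then gives $H(A\cup B)=H(A)[0..\alpha]\,\|\,H(B)[\beta..]$, so it remains to check that Algorithm~\ref{alg:merge} produces exactly this list.

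\emph{Convex case.} Appending $R$ without $s$ drops $s$ from the first segment of $R$. That segment becomes $(m+1,\ell+1,h)$, or disappears if it held only $s$. The result is still contiguous.

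\emph{Bridge case.} The vertex $H(A)[\alpha]$ lies in a unique segment $(j,\ell,h)$ of $L$. Replacing it by $(j,\ell,\alpha')$, where $\alpha'$ is the position in $H_j$, and deleting the later segments yields precisely the prefix $H(A)[0..\alpha]$, again as contiguous ranges. The same holds symmetrically for the suffix of $R$. Concatenating keeps chunk indices strictly increasing, because every index of $L$ is at most $m$ and every index of $R$ is at least $m+1$. Each chunk therefore contributes at most one segment, which gives the bound of $c$.

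\textbf{Correctness of the bridge found.} Two methods are used.
\begin{itemize}
\item \emph{Walk.} Its fixed point supports both chains, so by the uniqueness in Lemma~\ref{lem:merge} it is the bridge.
\item \emph{Bisection.} The first vertex $a$ of $L$ above its tangent to $R$ is handled by a standard unimodality argument: the sign of ``$H(A)[i]$ lies above its tangent to $H(B)$'' is monotone in $i$ by convexity. So the bisection returns the left bridge end, and the tangent point gives $b$.
\end{itemize}

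\textbf{Main obstacle: collinear points.} The bisection may return a collinear point, either stopping at an end of a collinear run or keeping a vertex lying on the bridge line. In that case the concatenation is a sequence of hull points that contains all true vertices (by Lemma~\ref{lem:filter}) plus possibly collinear middles.

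The plan is to state the theorem for the reduced sequence. We would show that any kept extra point lies on a hull edge and so changes no slope. We would also show that it can only appear at a segment's cut end, so contiguity and the segment count are unaffected. Finally, later merges treat such a point like any other, since Lemma~\ref{lem:merge} remains valid when $H(A)$ carries collinear points: the bridge is the same line.

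We expect this step to require the most care: identifying the bridge uniquely in the presence of collinear runs, and checking that the walk's non-strict turn test removes them.
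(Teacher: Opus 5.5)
Your induction is essentially the paper's proof: the leaf's list is the whole chunk hull, Lemma~\ref{lem:merge} gives the merged hull as a prefix of the left concatenation followed by a suffix of the right one, cutting the two segments the bridge touches leaves contiguous ranges, and the disjoint, increasing chunk indices give the bound of $c$ segments. Your sections on how the bridge is found and on collinear contact points are not needed, because the paper states the theorem for the canonical hull and an exact merge and mentions the retained collinear points of the implementation only in a remark after the proof; moreover, contiguity and the segment count hold whichever contact vertex a merge cuts at, so your ``cut end'' claim can be dropped.
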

\begin{proof}
By induction up the tree. A leaf, chunk $j$, has the list $[(j, 0, |H_j| - 1)]$. Let nodes
$A$ and $B$ be neighbours with lists $S_A$, $S_B$ satisfying the claim; their hulls share the
point $s$ at the end of $S_A$ and the start of $S_B$, and $A$ lies to the left of $s$, $B$ to
the right. By Lemma~\ref{lem:merge}, $H(A \cup B) = H(A)[0..\alpha] \,\|\, H(B)[\beta..]$.
The prefix $H(A)[0..\alpha]$ of the concatenation of $S_A$ is the sequence of segments of
$S_A$ up to the one containing $\alpha$, that one cut at $\alpha$'s position, $(j, \ell,
\mathrm{pos}_\alpha)$ with $\ell \le \mathrm{pos}_\alpha$; the suffix $H(B)[\beta..]$ is the
segment of $S_B$ containing $\beta$, cut to $(j', \mathrm{pos}_\beta, h)$ with
$\mathrm{pos}_\beta \le h$, followed by the rest of $S_B$. Cut segments remain contiguous
ranges of one chunk hull, and the chunk indices of $S_A$ are below those of $S_B$, so the
concatenation is a segment list of $A \cup B$. Since the kept part of $S_A$ and of $S_B$ use
disjoint chunks, the list has at most as many segments as the node has chunks.
\end{proof}

By the last clause, a node's list fits in one slot per chunk of the node. A merge can
therefore write its result where the left child's list was, moving the right child's tail down
by at most the number of its segments.

\begin{corollary}[reconstruction]
\label{cor:fit}
At the root, the segment list denotes $H(\{P_0, \dots, P_n\})$. For chunk $j$ with a
segment $(j, \ell, h)$ in it, the edges above the chunk's elements are the edges between
consecutive vertices $\ell..h$, the edge into vertex $\ell$ from the previous segment's last
vertex, and the edge out of $h$ into the next segment's first vertex; a chunk with no segment
lies under a single edge, from the last vertex of the last segment of an earlier chunk to the
first vertex of the next segment. Assigning each element the slope of the edge over it gives
$z^*$ by~\eqref{eq:slope}.
\end{corollary}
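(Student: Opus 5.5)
The statement to prove is Corollary~\ref{cor:fit}. The plan is to derive it from Theorem~\ref{thm:segments} applied at the root, and then read off the edges by a case analysis on each chunk. The root spans all chunks, so its points are $\{P_0, \dots, P_n\}$. By Theorem~\ref{thm:segments}, the concatenation of the root's list is the vertex sequence of $H(\{P_0,\dots,P_n\})$ without collinear middle points. This gives the first sentence.

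\textbf{Edges for a chunk with a segment.} The hull's edges are exactly the pairs of consecutive vertices in this concatenation. Fix chunk $j$ with segment $(j,\ell,h)$. Its vertices $\ell..h$ are consecutive in the concatenation, so the edges between them are hull edges. Because chunk indices strictly increase along the list, the vertex before $\ell$ is the last vertex of the previous segment, and the vertex after $h$ is the first vertex of the next segment. Hence the edge into $\ell$ and the edge out of $h$ are as stated.

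It remains to show these edges cover exactly the abscissae $[W_{c_j}, W_{c_{j+1}}]$ of the chunk's elements. For this I use that the abscissae of the vertices of chunk $j$'s hull lie in $[W_{c_j}, W_{c_{j+1}}]$, while vertices of earlier and later segments lie at or to the left of $W_{c_j}$ and at or to the right of $W_{c_{j+1}}$, respectively. The boundary cases are the shared points $P_{c_j}$ and $P_{c_{j+1}}$, which may occur as the last vertex of one chunk's hull and the first of the next. I would check that Theorem~\ref{thm:segments}, with Lemma~\ref{lem:merge} keeping $s$ once, never lists such a shared point in two segments, so no zero-length edge appears. The first and last vertices of the whole list are $P_0$ and $P_n$, the extreme points. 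Therefore the interval of chunk $j$ is covered by the edges incident to its segment's vertices.

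\textbf{Chunks without a segment.} Let chunk $j$ have no segment. Take the last segment of a chunk $j'<j$ and the first segment of a chunk $j''>j$. Both exist, because the first vertex $P_0$ belongs to chunk $0$ and the last vertex $P_n$ belongs to the last chunk, and neither of those chunks is $j$ in this case unless the row has one chunk, which is trivial. The last vertex of the earlier segment and the first vertex of the later segment are consecutive in the concatenation, so they form one hull edge. Its abscissae straddle $[W_{c_j}, W_{c_{j+1}}]$, so every element of chunk $j$ lies under it. This is the step I expect to need the most care. One must argue that no other hull vertex has abscissa strictly inside the chunk: any such vertex would be a point of chunk $j$. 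By Lemma~\ref{lem:filter} it would then be a vertex of $H_j$, and so it would appear in a segment of chunk $j$, contradicting the case assumption. The shared endpoints again need the convention that an endpoint is attributed to the neighbouring chunk's segment.

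\textbf{Slopes.} Finally, each element $i$ corresponds to the interval $[W_{i-1},W_i]$, and that interval lies under exactly one hull edge. By~\eqref{eq:slope}, assigning it that edge's slope gives $z^*_i$. Removing collinear middle points does not change the slopes (Section~\ref{sec:gcm}), so the representation without them yields the same fit.
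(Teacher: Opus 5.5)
Your proposal is correct and takes the paper's route: the corollary is read off from Theorem~\ref{thm:segments} at the root, where consecutive entries of the concatenation are the hull edges, and the slopes then follow from~\eqref{eq:slope}. Two small remarks: the listed edges contain the chunk's interval $[W_{c_j}, W_{c_{j+1}}]$ rather than cover it exactly, since the edges into $\ell$ and out of $h$ may reach into other chunks; and the appeal to Lemma~\ref{lem:filter} in the segment-free case is redundant, because the two bounding vertices are already consecutive in the concatenation, so no hull vertex lies between them.
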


Degenerate cases follow the same convention. A node always has at least two vertices, its
ends, so the predecessor of $s$ in $H(A)$ and its successor in $H(B)$ exist. The shared point
appears in both lists; it is kept once in the convex case and dropped from both otherwise. A
cut cannot produce an empty segment. A turn of zero is treated as non-convex in the lane and
chunk hulls and in the walks, so collinear middle points are dropped there. Theorem~\ref{thm:segments} is
stated for this canonical hull. The implementation's bridge search may stop at the first of
several collinear points of contact and so keep a collinear middle point in a merged list. Such
a list then also contains points that lie on edges of the hull. A point on an edge changes no
slope, so the fit is the same.

\section{Analysis}
\label{sec:analysis}
We analyse one row of $n$ elements; rows of a batch are independent. Let $C$ be the chunk
size ($C = 1024$ for long rows), $m = \lceil n/C \rceil$ the number of chunks, $h_j$ the
number of vertices of chunk $j$'s hull and $H = \sum_j h_j \le n + m$, and $P = 32$ the points
per lane. The merge tree has $R = \lceil \log_2 m \rceil$ levels; a node of level $r$ spans
$2^r$ chunks and, by Theorem~\ref{thm:segments}, its list has at most $2^r$ segments. Work
counts operations over all threads; depth counts the longest chain of dependent operations.

\subsection{Chunk hulls}
A lane processes its $P$ points with one push and at most one pop each, in a loop whose only
data-dependent branch is the rare fallback of the turn predicate, so the lane hulls of a chunk
cost $\Theta(C)$ operations at depth $\Theta(P)$. The lane and warp merges inside a chunk walk
to their bridges:

\begin{lemma}[walks]
\label{lem:walk}
In a merge by walking (Section~\ref{sec:correctness}), every step of a pass except its last
removes a vertex of one of the two children's hulls, one that is not a vertex of the merged
hull, and the outer loop ends after the first pass without a step. Hence a merge that removes
$d$ vertices takes at most $3d + 2$ steps, and over all merges of a node, whose vertices are
each removed at most once, the walks take $O$(vertices of the node's children) steps.
\end{lemma}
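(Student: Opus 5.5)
The proof has three parts: a monotonicity invariant for the two indices of the walk, the conclusion that every non-final step deletes a vertex for good, and a count of passes.

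Fix a merge of children with hulls $H(A)$ and $H(B)$ sharing $s$, and let $(\alpha^\ast,\beta^\ast)$ be the bridge of Lemma~\ref{lem:merge}. The walk keeps a pair $(\alpha,\beta)$, starting at the neighbours of $s$ when the chain is not convex there, and alternates a left pass (decrement $\alpha$ while the turn $(H(A)[\alpha-1],H(A)[\alpha],H(B)[\beta])$ is not strictly left) with a right pass (the symmetric rule for $\beta$). The invariant to prove is $\alpha \ge \alpha^\ast$ and $\beta \le \beta^\ast$: the indices never overshoot the bridge. It holds at the start because $\alpha^\ast$ precedes $s$ and $\beta^\ast$ follows it. It is preserved by a standard tangent argument. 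Suppose $\beta \le \beta^\ast$ and the turn at $H(A)[\alpha]$ toward $H(B)[\beta]$ is not strictly left. Then $H(A)[\alpha]$ lies on or above the line from $H(A)[\alpha-1]$ to $H(B)[\beta]$. That line lies on or above the bridge line over the relevant range, since $H(B)[\beta]$ is on or above the bridge by Lemma~\ref{lem:merge}. Also, $H(A)[\alpha-1]$ is on or above the bridge, and $\alpha-1\ge\alpha^\ast$ because a point strictly left of the bridge's left end passes the turn test against any point on or above the bridge.

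Given the invariant, each step of a pass that actually moves an index passes over a vertex strictly between $\alpha^\ast$ and $s$, or strictly between $s$ and $\beta^\ast$. Such a vertex is not in $H(A\cup B)=H(A)[0..\alpha^\ast]\,\|\,H(B)[\beta^\ast..]$. It is therefore removed, and it is never revisited, since the indices are monotone. Each pass ends with one failing test, which is its ``last step'' and removes nothing.

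For termination, the outer loop stops when a pass makes no move. Then the fixed point supports both chains and is the bridge, by uniqueness in Lemma~\ref{lem:merge}. A pass that moves at least once consumes at least one removed vertex, so at most $d$ passes move. Counting the non-moving test of each of those passes, the two initial convexity and neighbour tests, and the final two non-moving passes that certify the fixed point gives at most $d + d + 2 + \dots$ tests. I would bound the total crudely as $3d+2$: $d$ moving steps, at most $d$ terminating tests of moving passes, at most $d$ extra from alternation overhead, and the $2$ final idle tests. The exact constant needs the alternation bookkeeping done carefully.

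For the node-level claim, a vertex removed in a merge does not belong to the merged hull. By Lemma~\ref{lem:filter} it never becomes a vertex of any ancestor's hull, so it is removed at most once across all merges of the node. Summing $3d+2$ over the merges gives $3\cdot(\text{vertices of the children}) + 2\cdot(\text{number of merges})$. The number of merges is at most the number of children, which is at most their vertex count, so the total is $O$(vertices of the node's children).

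The main obstacle is the invariant. The convexity argument has to treat carefully the collinear case, which the paper counts as non-convex, and the interaction between the two alternating passes. That is where I would spend effort: each step is a case check on the turn sign relative to the bridge line.
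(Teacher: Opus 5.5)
Your plan follows the paper's own argument. The paper gives no separate proof; its reasoning is the one packed into the statement: non-final steps remove non-vertices of the merged hull, the loop stops at the first idle pass, and each vertex is removed once. Your no-overshoot invariant is the right way to make the first clause precise. What the invariant needs is that $H(A)[\alpha^\ast-1]$ lies \emph{strictly} above the bridge line. This holds because in the collinear-free hull $\alpha^\ast$ is the leftmost point of contact. Any $H(B)[\beta]$ is on or above that line, so it makes a strict left turn at $\alpha^\ast$ and the walk cannot pass it; symmetrically for $\beta^\ast$. The first two sentences of your preservation argument play no role. Nor does the hypothesis $\beta\le\beta^\ast$, since the bridge line supports all of $H(B)$, so the two halves of the invariant do not interact.

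The gap is the clause ``the outer loop ends after the first pass without a step''. You neither prove nor use it: you budget two final idle passes plus an unexplained ``alternation overhead'', and leave the constant open. The missing observation takes one line:
\begin{itemize}
\item A pass stops only when its own exit condition holds at the current pair (a strict left turn, or the end of its chain).
\item If the next pass makes no move, neither index has changed. So both exit conditions hold, the pair is a fixed point, and the loop ends.
\end{itemize}
Hence only the very first pass and the last one can be idle. With $k\le d$ moves there are at most $k+2$ passes, each contributing one terminating test. The walk therefore takes at most $2k+2\le 2d+2$ steps. The convexity test at $s$ still fits within $3d+2$, because $s$ itself is among the removed vertices, so $d\ge 1$.

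If you prefer a loop that stops only after a left--right round without a move, the same reasoning gives at most $d+1$ rounds with two terminating tests each. That is $d+2(d+1)=3d+2$, which is the accounting you guessed. With this in place, your summation over the merges gives the node bound as stated.
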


Thus a chunk's merges take $O(C)$ steps in total. A step may also skip lanes whose masks are
empty to reach the next vertex, so it costs up to $O(C/P)$ operations, and the chunk's merges
at most $O(C^2/P)$. Their depth is that of the longest single walk, $\Theta(C)$ in the worst
case, when one point at the end of the chunk pulls the bridge across all of it. $C$ is a
constant of the implementation. For fixed $C$ the chunk phase therefore costs
$\Theta(n)$ work at depth $O(C)$, and the exclusive scan of the chunk totals $\Theta(m)$ work
at depth $O(\log m)$.

\subsection{Merging the chunk hulls}
\label{sec:analysis-merge}
A merge at level $r$ tests the shared point, finds the bridge, and moves the right child's
tail of at most $2^r$ descriptors down. Moving costs $O(m)$ descriptors per level, hence
$O(m \log m)$ in total. This is the only data movement of the phase that grows with $n$, and it
depends on the number of chunks only. For a fixed $C$, $m \log m = (n/C) \log (n/C)$ is not
$O(n)$. It is below $n$ when $C \ge \log_2 (n/C)$, which holds for $C = 1024$ on every input
that fits in memory; the statements below keep the term.

By Lemma~\ref{lem:walk}, finding the bridges by walking alone is $O(H)$ work over the whole
tree. Its depth, however, is not bounded by any polylogarithm. In a row whose last point lies
far below the rest, the merges along the right spine remove all other vertices, and the walk
at the top level takes $\Theta(n)$ steps. The implementation therefore caps each walk and then
searches: a candidate vertex of the left list is tested against its tangent to the right list,
and both are found by bisection, first over segments and then within one.

\begin{theorem}[merge phase, nested search]
\label{thm:merge-impl}
In the work-depth model, with the descriptor moves of a merge done as one parallel copy:
with every walk capped at a constant $T$ steps ($T = 64$ in the implementation) and the
nested search beyond it, the merge phase over $m = \lceil n/C \rceil$ chunk hulls does
$O(n + m \log m)$ work at depth $O(T \log n + \log^3 n)$.
\end{theorem}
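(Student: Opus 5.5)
The plan is to charge each merge of the tree for three things: the convexity test at the shared point, the bridge search (a capped walk, then possibly the nested search), and the descriptor move. Work is summed per level and over the $R = \lceil \log_2 m \rceil$ levels. Depth is the sum over levels of the longest merge at that level, since the merges of one level are independent. The convexity test and the segment cuts cost $O(1)$ per merge. There are $m-1$ merges, so these contribute $O(m)$ work and $O(1)$ depth per level. The descriptor moves were already bounded in Section~\ref{sec:analysis-merge}. A level-$r$ merge moves at most $2^r$ descriptors, which is $O(m)$ per level and $O(m\log m)$ in total. As one parallel copy, each move has depth $O(1)$, or $O(\log m)$ if the copy is charged as a scan. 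Either way this is well within the bound.

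For the walks, Lemma~\ref{lem:walk} gives what we need. Every step except the last of each pass removes a vertex that never returns, so the walks over the whole tree take $O(H) = O(n+m)$ steps. The same holds for aborted walks: their steps up to the cap still remove vertices, or at worst add $O(T)$ per merge, i.e.\ $O(Tm) = O(n)$ for constant $T$ and $C$. Each step also costs $O(1)$ operations, because vertices are addressed by segment and position and the next vertex is the neighbour in the slot or the first vertex of the next descriptor. The depth of a walk is at most $T$ per level, hence $O(T\log m) \subseteq O(T\log n)$ over the tree.

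The nested search is the crux. I would show that the bisection over $L$ is valid. By convexity of $H(A)$ and Lemma~\ref{lem:merge}, the predicate ``the vertex lies above its tangent to $R$'' is monotone along $L$: false before $\alpha$, true after. A binary search over $L$'s at most $2^r \le m$ segments, then over positions within one segment of at most $C+1$ vertices, therefore takes $O(\log m + \log C) = O(\log n)$ probes. Each probe needs the tangent from a point to $R$. This is again a two-level bisection over $R$'s segments and positions, because the turn with consecutive vertices of a convex chain changes sign once. So each probe costs $O(\log n)$, and one search costs $O(\log^2 n)$ sequentially. A search happens at most once per merge, so the work is $O(m\log^2 n)$. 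I would absorb this into $O(n)$ using $m = n/C$ with $C$ fixed. The cleanest route is to note that the statement's $O(\cdot)$ treats $C$ as a constant. If that is not enough, I would argue that $m\log^2 n = O(n)$ on the same footing the paper uses for $m\log m$, that is, $C \ge \log^2 n$ in range. I expect this step to be the main obstacle.

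For depth, a search at each of the $O(\log n)$ levels has depth $O(\log^2 n)$ even when run serially, which gives $O(\log^3 n)$ in total. Adding the walk depth $O(T\log n)$ and the $O(1)$ or $O(\log m)$ copies per level yields depth $O(T\log n + \log^3 n)$. The total work is $O(n + m\log m)$.
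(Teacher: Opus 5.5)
\textbf{Gap in the work bound for the searches.} Your handling of the walks, the descriptor moves and the depth matches the paper's. The work bound for the nested searches does not go through. You bound every search uniformly by $O(\log^2 n)$, which gives $O(m\log^2 n) = O((n/C)\log^2 n)$ in total. Treating $C$ as a constant is exactly the case where this fails. Then $m = \Theta(n)$ and the term is $\Theta(n\log^2 n)$, which is not even $O(m\log m) = O(n\log n)$, so it cannot be absorbed into $O(n + m\log m)$. Your fallback, $C \ge \log^2 n$, is a hypothesis the theorem does not make. Section~\ref{sec:analysis-merge} keeps the $m\log m$ term precisely to avoid assuming even $C \ge \log_2 m$.

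\textbf{The fix: keep the level in the cost of a search.} A node at level $r$ has at most $2^r$ segments (Theorem~\ref{thm:segments}), each of at most $C+1$ vertices. So the outer bisection takes $O(r + \log C)$ candidates and each tangent takes $O(r + \log C)$ steps, not $O(\log n)$. Level $r$ also has at most $\lceil m/2^{r+1}\rceil$ merges. The number of merges falls geometrically with $r$ while the cost of a search grows only polynomially in $r$, so
\[
\sum_{r} \Bigl(\frac{m}{2^{r+1}} + 1\Bigr)\, O\bigl((r+\log C)^2\bigr) = O(m\log^2 C + \log^3 m).
\]
Moreover, $m\log^2 C = O(n\log^2 C/C) = O(n)$ for every $C$, since $\log^2 C/C$ is bounded. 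With this per-level count in place of your uniform bound, the rest of your argument gives the stated $O(n + m\log m)$ work. Your depth estimate of $O(\log^2 n)$ per level, hence $O(\log^3 n)$ over the tree, is fine as it stands.
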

\begin{proof}
Capped walks take at most $3H + 2m = O(n)$ steps in total (Lemma~\ref{lem:walk}). A merge
searches at most once. At level $r$ there are at most $\lceil m / 2^{r+1} \rceil$ merges, and
the outer bisection of a search takes $O(r + \log C)$ candidates, each tested with a tangent
of $O(r + \log C)$ steps; the searches therefore cost
$\sum_r (m / 2^{r+1} + 1)\, O((r + \log C)^2) = O(m \log^2 C + \log^3 m)$, which is $O(n)$, as
$m \log^2 C = n \log^2 C / C$. The descriptor moves cost $O(m \log m)$. A level's depth is at
most $T + O((r + \log C)^2)$, and the sum over the $\lceil \log_2 m \rceil$ levels is
$O(T \log m + \log^3 n)$.
\end{proof}

Theorem~\ref{thm:merge-impl} assumes that a merge moves its descriptors in one parallel step.
In the code, a merge run by one thread moves them one by one, and a merge run by a warp moves
them in rounds of 32. This changes the depth.

\begin{proposition}[merge phase, as executed]
\label{prop:merge-code}
The CUDA code's merge phase does $O(n + m \log m)$ work at depth
$O(T \log n + \log^3 n + m)$: the descriptor moves add at most $m/16 + \log_2 m$ sequential
rounds to the bound of Theorem~\ref{thm:merge-impl}.
\end{proposition}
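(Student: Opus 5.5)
The plan is to reuse the proof of Theorem~\ref{thm:merge-impl} unchanged for everything except the descriptor moves. Only the moves' depth changes when they become sequential, so I will account for that term separately.

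\textbf{Work.} The work bound does not change. A sequential or 32-wide copy of $k$ descriptors still costs $O(k)$ operations. The moves of level $r$ therefore still total $O(m)$, and $O(m\log m)$ over the tree. Together with the $O(n)$ of walks and searches from Theorem~\ref{thm:merge-impl}, this gives $O(n + m\log m)$.

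\textbf{Depth.} The depth of a level is the largest depth among its merges, since the merges of a level are independent and a level waits for the previous one. A merge's depth is its walk or search, bounded as in Theorem~\ref{thm:merge-impl} by $T + O((r+\log C)^2)$, plus its copy of at most $2^r$ descriptors. I will split the levels at the switch from one thread per merge to one warp per merge, at some level $r_0$.

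For a thread level $r < r_0$, the copy takes at most $2^r$ sequential rounds. For a warp level, it takes at most $\lceil 2^r/32\rceil \le 2^r/32 + 1$ rounds. Summing gives
\[
\sum_{r<r_0} 2^r + \sum_{r_0 \le r < R} \Bigl(\frac{2^r}{32} + 1\Bigr) \le 2^{r_0} + \frac{2^R}{32} + R .
\]
With $2^R < 2m$, the warp part is at most $m/16 + \log_2 m$. This is where the stated constant comes from: $2^R/32 < 2m/32 = m/16$, and the $+1$ per level sums to at most $R \approx \log_2 m$ rounds.

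\textbf{Main obstacle.} The hard part is the thread levels. The term $2^{r_0}$ must be absorbed into the stated count, or shown to be $O(1)$. I would argue from the implementation's rule that a level runs one thread per merge only while its lists are short, meaning $2^{r}$ is below a fixed constant such as 32. Then the thread-level copies add $O(1)$ rounds per level. That is either dominated by the $T$ term already counted in $T\log n$, or absorbed into the $\log_2 m$ term if the switch is defined so that thread levels have $2^r \le 1$ relative to the per-level count.

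If the code's switch depends instead on the number of merges, for example thread mode when there are at least some threshold $K$ merges, then $2^{r_0} \le m/K$. I would check $K$ against the code and fold the resulting bound into the $m/16$ term. Pinning down this threshold is the step I expect to need care. Adding the $O(m)$ rounds to $O(T\log n + \log^3 n)$ then gives the claimed depth.
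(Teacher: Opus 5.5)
Your plan follows the paper's proof: the same work argument, the same split of the levels at the switch from a thread to a warp per merge, and the same warp sum $\sum_r(2^r/32+1)$ with $2^R<2m$. The step you leave open is settled by the rule stated in Section~\ref{sec:implementation}: a level runs one thread per merge if it has at least $256$ merges, and a warp per merge otherwise. So your second option is the right one, with $K=256$. At least $256$ merges at level $r$ gives $2^r\le m/512$, and the thread levels add at most $\sum_{r<r_0}2^r<m/256$ sequential steps, which is the paper's bound for them.

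Your first option misreads the rule. The lists at thread levels are short only relative to $m$: the top thread level has $2^r$ of order $m/512$, which grows with $n$ ($128$ for $n=10^8$). These copies are therefore not $O(1)$ per level and cannot be hidden in the $T\log n$ term. The threshold is not needed for the asymptotic claim, though. Even with one thread per merge at every level, the copies would add at most $\sum_{r<R}2^r<2m$ rounds, which already gives the $O(m)$ term. The rule matters only for the explicit count $m/16+\log_2 m$.

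Two smaller points. First, you bound the non-copy part of every merge by the $T+O((r+\log C)^2)$ of Theorem~\ref{thm:merge-impl}. The warp levels do not run the modelled procedure, however: they do not walk, and they search with $32$ candidates per step. The paper's proof adds that this search is no deeper than the nested one, which is what keeps the theorem's bound at those levels; you need the same remark.

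Second, $m/256$ does not fit into your count. Your warp bound $2^R/32+R$ alone can exceed $m/16+\log_2 m$ by almost one round when $m$ is just above a power of two. The paper's proof lists the two contributions side by side without closing this either. If you want the literal constant, use that the root's right child spans only $m-2^{R-1}$ chunks. The warp levels then take fewer than $m/32+R$ rounds, which leaves room for $m/256$ whenever thread levels exist, since these need $m\ge512$.
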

\begin{proof}
A level runs one thread per merge only if it has at least 256 merges, so its nodes span
$2^r \le m/512$ chunks, and a thread moves at most $2^r$ descriptors; over these levels that is
at most $m/256$ steps. A level with fewer merges runs a warp per merge, which moves at most
$2^r$ descriptors in $\lceil 2^r/32 \rceil$ rounds; over all levels at most
$\sum_{r < \lceil \log_2 m \rceil} (2^r/32 + 1) < m/16 + \log_2 m$ rounds. The warp levels do not
walk, and their search is the nested one with 32 candidates per step, so their depth is at
most that of Theorem~\ref{thm:merge-impl}.
\end{proof}

The added term is linear in $n$ with a small constant, $m/16 = n/16\,384$. For $n = 10^8$,
$m = 97\,657$, and the moves take at most about $6\,100$ rounds. The term is largest on dense hulls, where every chunk
keeps a segment. The measurements of Section~\ref{sec:evaluation} include it.

The cube in the depth comes from the nested search, a bisection inside a bisection. The
simultaneous bisection of both chains of Overmars and van Leeuwen~\cite{overmars1981}, in the
three-case form of G{\ae}de et al.~\cite{gaede2024}, finds a bridge in a logarithmic number of
steps and lowers the depth of the model to $O(T \log n + \log^2 n)$ (Appendix~\ref{app:three},
Theorem~\ref{thm:merge}). We implemented it as a variant. It is slower in practice
(Section~\ref{sec:ablations}), so the default keeps the nested search. Since the levels run one
after another, neither search reaches the $O(\log n)$ depth of Goodrich's CREW
hull~\cite{goodrich1987}.

\subsection{Reconstruction}
The ends kernel locates each chunk's segments by bisection in the root list, which is
$O(m \log m)$ work. The fit kernel loads at most $h_j + 2$ slopes per 1024 elements and finds
each element's edge by a binary search of $O(\log h_j)$ steps, $h_j \le C + 1$. The phase
costs $O(n \log C)$ work in the worst case and $\Theta(n)$ for small hulls, at depth
$O(\log n)$.

\subsection{Totals, data movement and memory}
Over all phases the work is $O(n \log C + m \log m)$, the searches included. The fit
contributes a factor $\log C$ per element and the descriptor moves $\log m / C$. The depth is
$O(C + \log^3 n)$ in the model and $O(C + \log^3 n + m)$ as executed
(Proposition~\ref{prop:merge-code}). The walks and searches depend on the hull's geometry in
either representation. The difference lies in the data moved. With materialized hulls a merge
at level $r$ copies the right child's surviving vertices, up to all of them, so the phase moves
$\Theta(H \log m)$ vertices of 12 to 20 bytes, which is $\Theta(n \log n)$ when all points are
vertices. With segment lists it moves $O(m \log m)$ descriptors of 24 bytes, 40~MB for
$n = 10^8$. Peak memory is dominated by the vertex slots, $C + 1$ per chunk of 12 bytes (20 with
weights). This is three times the size of a float32 input, or 2.5 times that of the values and
weights together. The lists, counts and chunk starts take $O(m)$.

\section{Implementation}
\label{sec:implementation}
The operator has one CUDA code path for float32 and float64 and for weighted and unweighted
problems. A call launches at most five kernels plus one per merge level. It also fills the
rows' NaN flags and, for weighted calls, checks that the weights are positive, which waits for
the GPU. The constants below were tuned on the T4 (Section~\ref{sec:evaluation}).

\paragraph{Blocks and memory.}
Rows of up to 512 elements are packed several per warp and solved in registers and shared
memory. Rows of up to 4\,096 elements are one chunk, handled by a block of one, two or four
warps; for weighted float64 the limit is 2\,048, as a block of four warps would not fit in
shared memory. Longer rows are split into chunks of 1\,024 elements, one warp per chunk. The
block size is the largest whose shared memory (the chunk's cumulative sums, lane starts and
vertex stack) fits in the 48~KB of a launch. The lane hulls live in registers and use no local
memory. A chunk's vertex slot takes 12 bytes per vertex, 20 with weights.

\paragraph{Merge levels.}
The $\lceil \log_2 m \rceil$ levels are separate launches. A level runs one thread per merge if
it has at least 256 merges, and one warp per merge otherwise. A thread walks for at most 64
steps before it switches to the nested search; a warp searches directly, 32 candidates per
step, and moves the tail of the right list with coalesced copies. The three-case search of
Theorem~\ref{thm:merge} is implemented as a variant for the thread-per-merge levels.

\paragraph{Coordinates.}
The coordinates form a hierarchy, so that differences of nearby points keep their precision.
Within a lane, cumulative sums are taken in the working type from the lane's first point. In
float32 they are pairs (hi, lo) from a renormalized two-sum. The lo part is stored in shared
memory as a 16-bit code in units of $2^{-38}$ of the power of two of hi, which suffices since
a renormalized pair bounds it by $2^{14}$ such units; a pair then takes 6 bytes instead of 8 and
resolves $2^{-38}$ of a sum's magnitude. Lane starts within a chunk are pairs from a two-sum
scan across the warp. A chunk's vertices are stored as doubles relative to the chunk, and the
chunk starts along a row are double-double pairs from a scan over the chunk totals. A slope
between vertices of different chunks is formed from the differences of the hi parts (exact for
nearby chunks, by Sterbenz's lemma), of the lo parts and of the chunk-relative coordinates.
Absolute coordinates reach $10^{10}$ at $10^8$ elements, and a difference of two of them would
lose six digits.

\paragraph{Predicates.}
The turns of the lane and chunk levels are first evaluated on the hi parts. If the result
exceeds an a-priori error bound it is used; otherwise the turn is recomputed from the pairs
with two-sum and two-product. The bound is in terms of the magnitudes of the summed terms, the
lane's start and the sum within the lane, rather than of the coordinate alone, since the two
may cancel. The merge levels evaluate the cross product from absolute coordinates, one
addition per point, and recompute it from the differences above only if its magnitude is
within an error bound. That recomputation is double arithmetic at the scale of a chunk's sums.
On the T4 the merge levels are limited by double throughput, a thirty-second of the float
rate. The cross product takes seven double operations, and evaluating its bound in double, or
in float with conversions that run at the double rate, doubles the time of the merge levels.
The code therefore replaces each term of the bound by a power of two read from the exponent
field of a double's high word, including the chunk starts, and compares exponents in integer
arithmetic. The test is coarser than the bound by a constant factor, and zero or subnormal
operands always take the second evaluation. Filters of this kind are
classical~\cite{shewchuk1997,bartels2023}; this one uses no floating-point operation.
Appendix~\ref{app:filter} gives the bounds and their derivation, and
Section~\ref{sec:numerics} the accuracy they lead to.

\paragraph{Output.}
Every slope of the output is rounded to the working type once: in float32 from a quotient of
the pairs corrected by its remainder, in float64 from double coordinates. The output is not
guaranteed to be monotone (Section~\ref{sec:numerics}). The option \texttt{exact=False} drops
the pairs and the filter of the lane and chunk levels for float32.

\paragraph{Gradient.}
The fit is constant on blocks, each the weighted mean of $y$ over the block, so its Jacobian
is block averaging. With $g$ the gradient with respect to the fit and $G$, $W$ the sums of $g$
and $w$ over an element's block, $\partial/\partial y_i = w_i G / W$ and $\partial/\partial w_i =
(y_i - z^*_i)\, G / W$, wherever the blocks do not change. The blocks are those of the hull.
When a gradient is needed, the forward pass also writes one byte per element, set where the
element's point is a vertex of the row's hull, that is where a block starts. The fit kernel
knows this for every element it writes. An element at the start of a chunk is a vertex when it
closes the previous chunk's segment, which the ends kernel records. The backward pass sums over
these blocks without atomics. A warp takes a tile of the row, sized to the row and at most
1\,024 elements. A row that is one tile is finished in one kernel; a longer row takes three
steps, and the sums of the blocks that cross tiles come from prefix sums over the tiles'
boundary sums. The sums are pairs of the working type. Neighbouring blocks whose exact means
differ by less than the solver resolves may be pooled, so near such ties the gradient follows
the solver's partition; Appendix~\ref{app:grad} describes how the gradients are validated.

\section{Evaluation}
\label{sec:evaluation}
\subsection{Setup}
The measurements are on four NVIDIA GPUs of four generations: a Tesla T4 (Turing) hosted by
Kaggle, on a host with an Intel Xeon at 2.00~GHz (2 cores, 4 threads) and 31~GB of memory,
driver 580.178 and Python 3.13; and an L4 (Ada), an A100-SXM4-40GB (Ampere) and an H100 80GB
HBM3 (Hopper) hosted by Modal, driver 610.57 and Python 3.12. All run CUDA 12.8
and PyTorch 2.11. The baselines are torchsort 0.1.10, built from PyPI against the installed
PyTorch, and, on the T4's host, SciPy 1.16.3 and scikit-learn 1.6.1 with NumPy 2.1.3. We
measure \sysname{} 0.2.1. The constants
of the implementation (chunk sizes, the walk cap, the scheduling threshold) were tuned on the T4 alone; the other three GPUs show whether
the design depends on it.

Inputs are resident on the GPU. A fit is timed with the host's clock around the call and a
\texttt{torch.cuda.synchronize} after it, over ten calls after a warm-up and a busy wait that
raises the GPU's clock, stopping early once the calls have taken ten seconds, so that the
slowest torchsort cells are of two calls. We report the median and, where it matters, the
interquartile range. The gradient benchmark times each call with CUDA events, and the
per-stage times are GPU kernel times from the PyTorch profiler, means over five calls. On
noisy data the per-stage times repeat to within a few percent across sessions, while on dense
data the chunk stage of $10^8$ increasing elements varies by up to 25\% between calls of one
session on the T4. Comparisons between variants are therefore taken within one session, with
the default profiled beside each variant; the differences between representations in
Section~\ref{sec:vs-materialized} are far larger than this variation, while some of the
ablations are not.

The data families determine how many points survive as hull vertices. In \emph{noise}, a
ramp from 0 to 3 plus standard normal noise, few do, and in strictly increasing data all
points are vertices. In a sawtooth the bridges reach across many chunks, and in a
\emph{cliff}, increasing values followed by one far below, they reach across the whole row.
Each number below states its family. The shapes run from $10^6$ rows of $16$ elements through $10^4 \times 10^3$, $10^3 \times 10^4$
and $10^2 \times 10^5$ to one row of $10^7$ and one of $10^8$.

\subsection{Comparison with a one-thread-per-row solver}
\label{sec:vs-torchsort}
\begin{table*}[t]
\caption{Decreasing unweighted float32 fits, data on the GPU, ms, median of ten calls (two for
the slowest torchsort cells). The interquartile range of ours is within 6\% of the median
except at $100 \times 10^3$, where a call takes $0.1$~ms (up to 38\%, on the T4). torchsort runs pool adjacent violators with one thread per row; it was
not run on $10^8$ elements.}
\label{tab:torchsort}
\small
\begin{tabular}{@{}rrrrrrrrr@{}}
\toprule
& \multicolumn{2}{c}{T4 (Turing)} & \multicolumn{2}{c}{L4 (Ada)} & \multicolumn{2}{c}{A100 (Ampere)} & \multicolumn{2}{c}{H100 (Hopper)} \\
batch $\times$ $n$ & ours & torchsort & ours & torchsort & ours & torchsort & ours & torchsort \\
\midrule
$100 \times 10^3$ & 0.12 & 3.04 & 0.11 & 2.42 & 0.10 & 3.09 & 0.07 & 2.42 \\
$10^6 \times 16$ & 2.25 & 72.2 & 0.73 & 6.95 & 0.49 & 3.40 & 0.28 & 1.91 \\
$10^4 \times 10^3$ & 2.30 & 99.8 & 0.65 & 14.0 & 0.53 & 18.7 & 0.35 & 12.5 \\
$10^3 \times 10^4$ & 1.77 & 197 & 0.88 & 139 & 0.58 & 190 & 0.37 & 123 \\
$1 \times 10^6$ & 0.54 & 550 & 0.40 & 415 & 0.29 & 588 & 0.21 & 432 \\
$1 \times 10^7$ & 2.13 & 5\,493 & 1.23 & 4\,947 & 0.83 & 6\,377 & 0.56 & 4\,683 \\
$1 \times 10^8$ & 20.0 & - & 8.38 & - & 4.33 & - & 2.74 & - \\
\bottomrule
\end{tabular}
\end{table*}
\begin{figure}[t]
\centering
\includegraphics[width=\columnwidth]{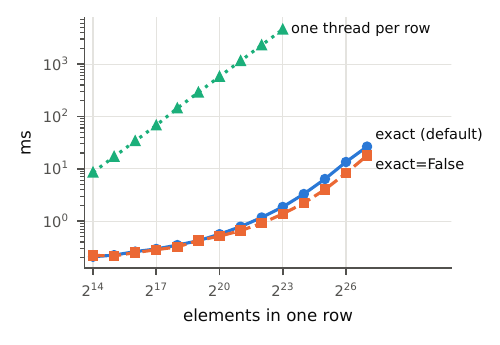}
\caption{One row of $n$ noisy float32 elements on a T4, median of ten calls: the default, the
option without the float32 pairs, and torchsort (to $2^{23}$).}
\label{fig:latency}
\end{figure}
Table~\ref{tab:torchsort} and Figure~\ref{fig:latency} compare the operator with
torchsort~\cite{torchsort}, the exact GPU solver used by differentiable sorting libraries, on
unweighted float32 rows. On many short rows, where a thread per row keeps every thread busy,
torchsort is 7 to 43 times slower. The ratio grows with the row's length and reaches three to
four orders of magnitude for a row of $10^7$ elements. torchsort's time for such a row is about
the same on the H100 as on the T4 and longer on the A100, as a single thread does not benefit
from a larger GPU. With the gradient included the ratio is smaller: $4.2$ against $6\,898$~ms
for a row of $10^7$ elements on the T4 (Section~\ref{sec:gradients}). On the T4's host CPU
(medians), SciPy's pool adjacent violators on one core takes $171$~ms for
a row of $10^7$ elements, and the library's CPU path, parallel over rows on the host's four
threads, $380$~ms for $10^6$ rows of 16; the GPU operator is $80$ and $169$ times faster.

\subsection{Segment-referenced versus materialized merging}
\label{sec:vs-materialized}
\begin{table*}[t]
\caption{The merge stage alone, ms, one row of $n$ float32 elements: segment lists (seg.)
against two controls that copy the surviving vertices of every merge into a flat array, with
the whole GPU at every level (grid) or by the thread or warp of each merge (merge). Same chunk
kernel, bridge search and schedule.}
\label{tab:materialized}
\small
\setlength{\tabcolsep}{4pt}
\begin{tabular}{@{}lrrrrrrrrrrrr@{}}
\toprule
& \multicolumn{3}{c}{T4} & \multicolumn{3}{c}{L4} & \multicolumn{3}{c}{A100} & \multicolumn{3}{c}{H100} \\
data, $n$ & seg. & grid & merge & seg. & grid & merge & seg. & grid & merge & seg. & grid & merge \\
\midrule
increasing, $10^7$ & 0.19 & 25.1 & 171 & 0.15 & 21.6 & 162 & 0.19 & 4.34 & 223 & 0.14 & 2.40 & 173 \\
increasing, $10^8$ & 1.85 & 351 & 2\,350 & 1.17 & 323 & 1\,644 & 1.25 & 50.7 & 2\,175 & 0.95 & 27.5 & 1\,717 \\
cliff, $10^8$ & 2.21 & 343 & 2\,312 & 1.32 & 320 & 1\,611 & 1.32 & 50.2 & 2\,128 & 1.01 & 27.3 & 1\,676 \\
noise, $10^7$ & 0.50 & 0.49 & 0.39 & 0.40 & 0.38 & 0.32 & 0.33 & 0.32 & 0.27 & 0.23 & 0.21 & 0.17 \\
noise, $10^8$ & 2.21 & 3.76 & 3.92 & 0.80 & 1.55 & 0.99 & 0.60 & 0.96 & 0.56 & 0.40 & 0.60 & 0.32 \\
\bottomrule
\end{tabular}
\end{table*}
\begin{figure}[t]
\centering
\includegraphics[width=\columnwidth]{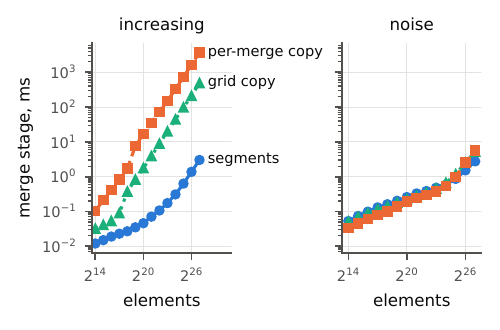}
\caption{The merge stage of one row of $n$ float32 elements on a T4: segment lists against
copying the surviving vertices with the whole GPU (grid copy) and merge by merge, on increasing
data (every point a vertex) and on noise.}
\label{fig:merge}
\end{figure}
Against torchsort the operator is faster on every shape (Section~\ref{sec:vs-torchsort}). This
section isolates the representation, whose gain depends on how many vertices survive the
merges. Table~\ref{tab:materialized} and Figure~\ref{fig:merge} compare it with two
controls. Both keep the pipeline and, after every merge, write the merged hull's vertices out
as one contiguous array. In \emph{per-merge copying}, the thread or warp that found a bridge
also copies the right child's surviving vertices down, in place, as the textbook merge does.
At the top levels of a long row a single warp then copies tens of millions of vertices, so this
control measures copying at the speed of one warp. \emph{Grid copying} finds every merge's
bridge in one kernel and writes every merged hull into another buffer in a second one, with all
of the GPU's blocks and coalesced accesses, as stream compaction would~\cite{billeter2009}.
Each surviving vertex then moves once per level at the memory's bandwidth. Both controls
evaluate their turns on absolute coordinates without the filter of
Section~\ref{sec:implementation}; with plain predicates the default's merge stage is up to
24\% faster (Section~\ref{sec:ablations}).

Where most points survive (increasing data, or a cliff whose bridges reach across the row),
both controls move $\Theta(n \log m)$ vertices. Grid copying moves them at half to two thirds
of each GPU's peak bandwidth. The merge stage with segment lists is $133$ to $190$ times faster
on the T4, $141$ to $276$ times on the L4, $23$ to $41$ times on the A100 and $17$ to $29$ times
on the H100, at $10^7$ and $10^8$ elements. Per-merge copying is another 5 to 72 times slower.
Figure~\ref{fig:merge} shows the growth on the T4: the copies grow as $n \log n$, the segment
lists with the number of chunks. On noisy data only a few vertices of each chunk's hull
survive its first merges. There the three are close at $10^7$ elements, with per-merge
copying up to 27\% faster, as a flat array saves one indirection per vertex. At $10^8$ the
segment lists are $1.5$ to $2.0$ times faster than grid copying, and between $1.8$ times faster
(T4) and $23\%$ slower (H100) than per-merge copying.

\begin{table*}[t]
\caption{The whole operator against grid copying, ms of GPU kernel time (all kernels), one row
of $n$ float32 elements; same profiles as Table~\ref{tab:materialized}.}
\label{tab:whole}
\small
\begin{tabular}{@{}lrrrrrrrr@{}}
\toprule
& \multicolumn{2}{c}{T4} & \multicolumn{2}{c}{L4} & \multicolumn{2}{c}{A100} & \multicolumn{2}{c}{H100} \\
data, $n$ & seg. & grid & seg. & grid & seg. & grid & seg. & grid \\
\midrule
increasing, $10^7$ & 3.25 & 29.4 & 2.01 & 23.8 & 1.33 & 5.68 & 0.83 & 3.18 \\
increasing, $10^8$ & 42.8 & 396 & 19.4 & 345 & 11.9 & 63.6 & 7.28 & 34.9 \\
cliff, $10^8$ & 41.6 & 386 & 17.9 & 340 & 10.4 & 61.5 & 6.39 & 33.8 \\
noise, $10^7$ & 2.01 & 2.22 & 1.10 & 1.21 & 0.75 & 0.84 & 0.49 & 0.54 \\
noise, $10^8$ & 18.2 & 23.3 & 7.99 & 10.4 & 4.25 & 5.88 & 2.67 & 3.65 \\
\bottomrule
\end{tabular}
\end{table*}

Table~\ref{tab:whole} gives the whole operator. Only the chunk stage is the same kernel in
both. The control reconstructs the fit from its flat array with a kernel of its own, and before
its first merge it tags every vertex slot with the vertex's chunk, since a copied vertex leaves
its chunk's slot. On dense hulls the merge stage accounts for most of the difference: the
operator with segment lists is $9.1$ to $9.3$ times faster on the T4, $12$ to $19$ times on the
L4, $4.3$ to $5.9$ times on the A100 and $3.9$ to $5.3$ times on the H100. On noise at $10^7$
elements the merge stages take the same time, and the operator's difference of 10 to 13\%
comes from the control's other kernels: on the T4 its reconstruction takes $0.23$ against
$0.19$~ms and its tagging and buffers $0.28$ against $0.10$~ms. Part of this is the cost of a
flat layout and part that of the control's implementation, so we do not count it for the
segment lists. At $10^8$ noisy elements the merge stage contributes $1.6$ of the T4's
$5.1$~ms difference.

\subsection{Ablations}
\label{sec:ablations}
Each variant changes one choice of the default. Table~\ref{tab:ablations} gives its time
relative to the default's on rows of $10^8$ elements, for the chunk stage in the first variant
and the merge stage in the others.
\begin{table}[t]
\caption{Ablations: time of a variant over the default's, one row of $10^8$ float32
elements; $>1$ means the default is faster.}
\label{tab:ablations}
\small
\setlength{\tabcolsep}{3.5pt}
\begin{tabular}{@{}llrrrr@{}}
\toprule
variant (stage) & data & T4 & L4 & A100 & H100 \\
\midrule
stack in local memory & noise & 2.10 & 1.58 & 1.77 & 1.81 \\
\quad instead of a bit mask (chunk) & increasing & 3.37 & 1.25 & 1.05 & 1.17 \\
three-case search, cap & noise & 1.89 & 2.89 & 1.91 & 1.99 \\
\quad growing with the level (merge) & cliff & 1.05 & 1.08 & 1.03 & 1.03 \\
turns on absolute & noise & 1.02 & 0.93 & 0.90 & 0.88 \\
\quad coordinates (merge) & sawtooth & 0.84 & 0.76 & 0.80 & 0.82 \\
a warp for every & noise & 2.37 & 3.43 & 1.37 & 1.29 \\
\quad merge (merge) & sawtooth & 0.72 & 0.57 & 0.39 & 0.36 \\
warps for capped & noise & 1.15 & 0.95 & 1.10 & 1.06 \\
\quad walks only (merge) & sawtooth & 0.71 & 0.52 & 0.60 & 0.55 \\
\bottomrule
\end{tabular}
\end{table}
The bit-mask lane hulls are faster than a stack in local memory with a pop loop on every
GPU, by $1.6$ to $2.1$ times on noise. On increasing data, where the stack is deepest, the
difference is $3.4$ times on the T4, which has the least local-memory bandwidth, and $1.05$ to
$1.25$ times on the others. The three-case search of Theorem~\ref{thm:merge}, with a walk cap
that grows with the level, is two to three times slower on noisy data and about as fast on
cliffs, where bridges are long. Plain double predicates on absolute coordinates, without the
coordinate hierarchy and its filter, are up to 24\% faster on sawtooth data, where merges walk
far, and up to 12\% on noise; the accuracy of Section~\ref{sec:numerics} depends on the
hierarchy. The last two variants change the schedule. A warp for every merge makes the merge
levels $1.4$ to $2.8$ times faster on sawtooth data, where many walks reach their cap and a
single thread then searches, and $1.3$ to $3.4$ times slower on noise, where many short merges
leave most of a warp idle. In the last variant a thread walks first and a second kernel
finishes the merges whose walk reached its cap, with a warp each. It makes sawtooth merges $1.4$ to $1.9$ times faster on all four GPUs, but is up to 15\%
slower on noise on three of them, which is why the default keeps the single rule.

\subsection{Performance breakdown}
\label{sec:breakdown}
On one row of $10^8$ float32 elements the fit takes $20.0$, $8.4$, $4.3$ and $2.7$~ms on the
T4, L4, A100 and H100 (medians). A device-to-device copy of the input takes $3.4$, $3.5$, $0.6$
and $0.27$~ms, so the fit takes $5.9$, $2.4$, $7$ and $10$ times as long as the copy. On noise the
chunk stage takes $63$ to $73\%$ of the kernel time on the four GPUs, the merge levels $10$ to
$15\%$, the reconstruction $7$ to $19\%$ and the prefix sums the rest. The chunk stage is limited
by the instructions of the lane hulls and their merges. The fit is therefore
furthest from the copy on GPUs with the most bandwidth per unit of compute, and closest on the
L4, whose memory is slowest relative to its arithmetic. With every point a vertex (increasing
data) the same row takes $42.8$, $19.4$, $11.9$ and $7.3$~ms of kernel time, $2.4$ to $2.8$
times as long as noise, since the chunk stage writes every vertex and the fit reads them all.
On the T4 the reconstruction's share grows from $10$ to $42\%$. On shorter rows the shares
change. At $10^7$ noisy elements the merge levels take $25$, $37$, $45$ and $48\%$ of the kernel
time on the four GPUs, and on the H100 as long as the chunk stage ($0.23$ against $0.20$~ms).
Each of the $14$ levels costs about $0.016$~ms however few merges it has, which is the latency
of a launch.

\subsection{Numerical accuracy}
\label{sec:numerics}
The reference is a float64 solution on the CPU of the same cast input, so that the comparison
measures the solver and not the cast; it is the library's own float64 pool adjacent violators,
which the test suite checks against SciPy. Ten data families (increasing, decreasing, constant,
piecewise constant with noise, Gaussian, random walk, sawtooth, a cliff, nearly collinear
points, and data offset by $10^6$) are crossed with $n \in \{10^3, 10^5, 10^7\}$, both
precisions, and three weightings: none, uniform in $[0.1, 1.1]$, and $e^{U[0, 8]}$, a dynamic
range of $3 \cdot 10^3$. We report four quantities:
\begin{itemize}
\item the largest step down of the output, relative to $\max |y|$;
\item the largest deviation from the reference, $\max_i |z_i - z^\star_i| / (1 + |z^\star_i|)$;
\item the excess of the weighted least-squares objective over the reference's, relative to the
reference's, or to $\varepsilon \sum_i w_i y_i^2$ where that is near zero;
\item the largest violation of the KKT conditions, computed in float64 from the output alone,
relative to $2 \sum_i w_i |y_i|$. The multipliers $\lambda_i = -2 \sum_{j \le i} w_j (z_j - y_j)$
must be non-negative, zero at the end, and zero where the output steps up. This measure
shrinks with $n$ for a fixed error per element, which is why the deviation is reported too.
\end{itemize}
We also compare the number of blocks with the reference's.

In float64 no output steps down on these families. The deviation is at most
$1.3 \cdot 10^{-10}$ (extreme weights, $n = 10^7$), the objective is within $7 \cdot 10^{-10}$,
the KKT residual is at most $8 \cdot 10^{-14}$, and the number of blocks is the same
everywhere. This includes $10^7$ increasing points with extreme weights, whose cumulative sums
reach $10^{10}$. A version that forms slopes from differences of absolute sums, without the
coordinate hierarchy, pools $681$ of these points that are vertices of the reference and steps
down by up to $9 \cdot 10^{-7}$.

In float32, over all families, sizes and weightings, the deviation is at most
$1.2 \cdot 10^{-7}$, the objective within $7 \cdot 10^{-9}$ and the KKT residual below
$6 \cdot 10^{-8}$. No output steps down, here or in the test suite, which checks monotonicity
on its own families with three weightings. On $10^7$ consecutive floats, each one ulp above the
previous, every point stays its own block without weights, as in the reference. With weights,
about 8\% of neighbouring outputs coincide: the products $w_i y_i$ are rounded to float32
before the hull is built, and neighbouring slopes may then round to the same value.

A further family stresses cancellation. Each 64 elements hold 32 values of $+B$, one of $-32B$
and 31 small, nearly collinear values, for $B$ from $10^3$ to $10^{12}$ and $n$ from 500 to
$10^5$. The cumulative sums rise and fall back within every chunk, so a coordinate is the
difference of terms far larger than itself. In float32 the output does not step down and
deviates from the reference by at most $4 \cdot 10^{-9}$ of $\max |y|$. In float64 it deviates
by at most $1.4 \cdot 10^{-15}$ of $\max |y|$, and neighbouring outputs step down by up to
$5 \cdot 10^{-16}$ of $\max |y|$, a few units in the last place of the data's scale. Monotone
output is therefore not guaranteed in float64 either. A cumulative maximum over the output
would enforce it at the cost of one more pass over the data; the operator does not apply one.

These results use the default settings. The option \texttt{exact=False} drops the float32
pairs within the lanes and the filter of the chunk level; float64 is unaffected, as its sums use
no pairs. On a T4 it takes $1.30$ against $2.30$~ms for $10^4$ rows of $10^3$ elements and
$13.7$ against $20.0$~ms for one row of $10^8$ (medians). Without the pairs, a lane's
cumulative sum of 32 terms can be 32 times a single term, or $32 e^8$ times with extreme
weights. Its float32 rounding then loses the last bits of each increment and decides which
points are vertices. On the $10^7$ unweighted consecutive floats, \texttt{exact=False} keeps
$1.17$ million of the $10^7$ blocks. It pools 88\% of the points and so has less to merge and
fit, which is part of its speed. With extreme weights it deviates from the reference by up to
$4.4 \cdot 10^{-3}$ and steps down by up to $8.5 \cdot 10^{-4}$ of the data's scale. 

\subsection{Case study: calibrating a click model}
\label{sec:case}
Isotonic regression is a standard non-parametric calibration of classifier
scores~\cite{zadrozny2002,niculescu2005}. We reproduce a common industrial setting on the
Criteo display-advertising data~\cite{criteo2014}, 45.8 million impressions over seven days in
time order. A logistic regression with hashed features is trained on the first five days with
a tenth of the non-clicks, a negative downsampling that leaves the predicted click
probabilities far too high~\cite{he2014}. The sixth day calibrates the model and the seventh
evaluates it, with one curve for all traffic and, separately, one curve for each of the 280
frequent values of a categorical feature. Isotonic calibration brings the expected calibration
error from $0.41$ to $0.0015$; Appendix~\ref{app:calibration} compares it with the analytic
correction and Platt scaling.

The fits agree with scikit-learn's to $6 \cdot 10^{-16}$ for one curve and to
$7 \cdot 10^{-10}$ per segment, where the keys (segment plus score) are float64. We time the
fits alone on both sides. On a T4, one curve over the $6.5$ million scores of a day takes
$0.042$~s, and scikit-learn takes $2.0$~s on one core of the host CPU. Over all $45.8$ million
scores the times are $0.155$ and $17.2$~s, including the sort of the scores and the pooling of
equal ones. With the scores in host memory, as scikit-learn's are, the copy to the GPU doubles
the time to $0.31$~s. The 280 per-segment curves take $0.033$ against $0.77$~s. They are fitted
as one problem: the scores are keyed by segment and the targets of each segment offset by twice
its index, so that no block can extend across a segment boundary. The operator takes a
rectangular batch, so segments of unequal length are either padded, exactly, with values above
the row's maximum, or laid end to end in one row as here; the merge tree then spreads that row
across the whole GPU.

\subsection{Gradients}
\label{sec:gradients}
Table~\ref{tab:gradients} gives the fit alone and with its gradient with respect to $y$, for a
random $g$. On rows of $10^3$ elements the backward pass adds a quarter to three quarters of the
fit's time. On a row of $10^7$ elements and on $10^6$ rows of 16 it adds one to two times the
fit's time; on the short rows it pays a fixed cost per tile, while the forward pass packs several
rows to a warp. The lead over torchsort is therefore smaller with the gradient than for the fit
alone.

\begin{table*}[t]
\caption{Fit and fit plus gradient, ms, decreasing unweighted float32; median of ten calls timed
with CUDA events. The interquartile range is within 4\% of the median except for ours with the
gradient on $10^6$ rows of 16 on the T4 (17\%).}
\label{tab:gradients}
\small
\begin{tabular}{@{}lrrrrrrrr@{}}
\toprule
& \multicolumn{2}{c}{T4} & \multicolumn{2}{c}{L4} & \multicolumn{2}{c}{A100} & \multicolumn{2}{c}{H100} \\
batch $\times$ $n$ & fit & +grad & fit & +grad & fit & +grad & fit & +grad \\
\midrule
$10^6 \times 16$ & 2.17 & 4.78 & 0.72 & 2.06 & 0.51 & 1.53 & 0.29 & 0.82 \\
\quad torchsort & 74.4 & 85.2 & 7.03 & 8.95 & 3.43 & 4.43 & 1.80 & 2.33 \\
$10^4 \times 10^3$ & 2.27 & 2.86 & 0.65 & 1.12 & 0.56 & 0.72 & 0.34 & 0.43 \\
\quad torchsort & 99.4 & 106 & 14.0 & 17.6 & 19.0 & 23.5 & 12.6 & 15.6 \\
$1 \times 10^7$ & 2.10 & 4.17 & 1.23 & 3.02 & 0.89 & 2.30 & 0.57 & 1.75 \\
\quad torchsort & 5\,491 & 6\,898 & 4\,948 & 6\,376 & 6\,423 & 8\,157 & 4\,724 & 5\,961 \\
\bottomrule
\end{tabular}
\end{table*}

\subsection{Differentiable ranking and a training step}
\label{sec:e2e}
The soft rank of Blondel et al.~\cite{blondel2020} with $L_2$ regularization of strength
$\varepsilon$ projects $\theta = x/\varepsilon$ onto the permutahedron: it sorts $\theta$ in
decreasing order into $s$, solves a decreasing isotonic regression of $z = s - w$ with
$w = (n, \dots, 1)$, and returns $s$ minus that fit in the original order. We time torchsort's
\texttt{soft\_rank} against a copy of its implementation in which only the isotonic solver,
forward and backward, is replaced by \sysname{}; the sort, the inverse permutation and the
backward pass around the solver are torchsort's code. The values are standard normal
(Table~\ref{tab:e2e}). We use two strengths: $\varepsilon = 1$, torchsort's
default, at which the ranks of these values are pooled into one block per row, and
$\varepsilon = 1/n$, at which about 70\% of the elements start a block and the soft ranks are
close to the hard ones.

\begin{table*}[t]
\caption{Soft rank (forward and backward) and a training step with a Spearman loss, ms, median of
20 calls, $\varepsilon = 1/n$, torchsort / ours. The two differ only in the isotonic solver. The
training step is a forward, backward and Adam step of a two-layer scoring model on $n$ items per
query.}
\label{tab:e2e}
\small
\begin{tabular}{@{}llrrrrrrrr@{}}
\toprule
& & \multicolumn{2}{c}{T4} & \multicolumn{2}{c}{L4} & \multicolumn{2}{c}{A100} & \multicolumn{2}{c}{H100} \\
& batch $\times$ $n$ & torchsort & ours & torchsort & ours & torchsort & ours & torchsort & ours \\
\midrule
soft rank & $1024 \times 100$ & 2.42 & 0.97 & 1.94 & 1.01 & 2.63 & 1.32 & 1.53 & 0.50 \\
& $256 \times 10^3$ & 6.77 & 0.96 & 4.34 & 1.06 & 6.02 & 1.42 & 4.26 & 0.52 \\
& $64 \times 10^4$ & 41.9 & 1.60 & 25.0 & 1.62 & 35.7 & 2.13 & 27.5 & 0.95 \\
& $8 \times 10^5$ & 208 & 2.24 & 147 & 1.59 & 194 & 1.70 & 160 & 0.93 \\
& $1 \times 10^6$ & 879 & 3.74 & 638 & 1.55 & 905 & 1.68 & 718 & 0.88 \\
\midrule
training step & $256 \times 10^3$ & 14.4 & 7.98 & 11.4 & 6.78 & 8.66 & 2.93 & 5.37 & 1.72 \\
& $32 \times 10^4$ & 46.0 & 10.3 & 32.4 & 8.62 & 38.5 & 3.68 & 29.0 & 2.14 \\
& $4 \times 10^5$ & 124 & 13.2 & 92.3 & 10.8 & 121 & 4.22 & 89.5 & 2.27 \\
\bottomrule
\end{tabular}
\end{table*}

With $\varepsilon = 1/n$, forward and backward together are $1.9$ to $820$ times faster with
\sysname{} on the four GPUs, the gain growing with the length of the rows; with $\varepsilon = 1$ they are $1.8$ to $382$ times faster. The training step is a synthetic
benchmark of a compact scoring model: a two-layer network ($32 \to 128 \to 1$) scores each item,
the soft rank of the scores is taken per query, and the loss is one minus the Spearman correlation
with the target ranks. Its gain depends on the soft rank's share of the step. For $4$ queries of
$10^5$ items with $\varepsilon = 1/n$, the soft rank takes nearly all of torchsort's step on every
GPU and 12 to 16\% of ours on the T4 and L4, and the step is $8.5$ to $9.4$ times faster there. On
the A100 and H100 the network is faster, the soft rank takes 43 to 44\% of our step, and the step
is $29$ to $40$ times faster. For $256$ queries of $10^3$ items the step is $1.7$ to $3.1$ times
faster. With $\varepsilon = 1$ the steps are $1.5$ to $17$ times faster. The model is cheap, so the
soft rank's share is large; with a heavier model the gain shrinks with that share.

We checked both solvers against a float64 solution of the same float32 problem $z$. The
values and gradients of \sysname{} agree with it closely on all four GPUs, except on two of them
at one boundary between two blocks of exactly equal means, where the fit is not
differentiable. torchsort's values deviate
by up to $2.5 \cdot 10^{-4}$ with $\varepsilon = 1$, where it accumulates long blocks in
float32, and its gradients by up to $4.6\%$ in the $\ell_2$ norm with $\varepsilon = 1/n$ and
$n \ge 10^4$, on all four GPUs. Appendix~\ref{app:grad} gives the measures and their analysis.

\section{Related Work}
\label{sec:related}
\paragraph{Isotonic regression.}
The equivalence of isotonic regression with the greatest convex minorant of the cumulative
sum diagram is classical~\cite{barlow1972,robertson1988}. Stout~\cite{stout2022} surveys the
sequential algorithms, from pool adjacent violators to the $O(n)$ bounds for the $L_2$ case.
Kearsley, Tapia and Trosset~\cite{kearsley1996} and Kearsley~\cite{kearsley2006} parallelize by partitioning the
row, solving the parts and merging their solutions, the scheme behind Spark's implementation.
The merge of the parts' solutions is sequential there; segment lists are a way to do this merge
on a GPU. Henzi, M\"osching and D\"umbgen~\cite{henzi2022} accelerate the algorithm for
isotonic distributional regression, where one data set yields a batch of weighted problems,
the setting of our operator. On accelerators, Sander et al.~\cite{sander2023} solve isotonic
regression approximately by Dykstra's projections inside a differentiable top-$k$.
torchsort~\cite{torchsort}, built for the soft sorting and ranking of Blondel et
al.~\cite{blondel2020}, runs the exact pool adjacent violators with one thread per row,
forward and backward, and is our main baseline.

\paragraph{Parallel hulls of sorted points.}
The merge of two $x$-separated hulls along a bridge is due to Preparata and
Hong~\cite{preparata1977}, and its $O(\log n)$ search to Overmars and van
Leeuwen~\cite{overmars1981}. Their dynamic hull keeps chains in concatenable queues, balanced
trees that are split at a bridge and concatenated in $O(\log n)$, so a merge relinks nodes
instead of copying vertices. Segment lists apply the same idea in a flat form: at most one
descriptor per chunk, cut at the bridge, concatenated by moving descriptors, and read by
indexing. G{\ae}de et al.~\cite{gaede2024} reduced the case analysis of Overmars and van
Leeuwen to three cases; we use their form, mirrored to lower hulls, in
Theorem~\ref{thm:merge}. For presorted points, Goodrich~\cite{goodrich1987} reaches $O(\log n)$
depth on a CREW PRAM. Berkman, Schieber and Vishkin~\cite{berkman1996} reach $O(\log\log n)$ on
a CRCW PRAM with an implicit representation of the hull, since writing it into a contiguous
array, a compaction, would take $\Omega(\log n / \log\log n)$ time. Our layout, one slot per
chunk, serves the same purpose on a GPU. Browne et al.~\cite{browne2023} give an
$O(n)$-work, $O(\log n)$-span algorithm in the binary forking model.

\paragraph{Hulls on GPUs.}
Work on two-dimensional hulls on GPUs treats unsorted point clouds, with QuickHull and
parallel partitioning~\cite{srungarapu2011,tzeng2012} or by filtering most points with extreme
witnesses before building the hull of the rest, as Keith, Ferrada and Navarro
do~\cite{keith2022}. Presorted input needs no such filter. Closest in design is the
warp-cooperative three-dimensional hull in CuACD~\cite{cuacd2026}. A warp there partitions
its points among lane groups, merges the sub-hulls pairwise in logarithmically many rounds,
and finds bridges by gift-wrapping with warp shuffles. We also use the warp as the unit of
design and merge in pairwise rounds. The differences are the input (sorted, two-dimensional,
up to $10^8$ points in one problem), the per-lane bit-mask stack, the walk whose cost is
amortized against removed vertices, the hierarchy with segment references, and floating-point
predicates with error bounds in place of integer determinants. Stream
compaction~\cite{billeter2009} is the usual GPU method to keep the survivors of a step; the
grid-copying control of Section~\ref{sec:vs-materialized} uses it. Choosing a thread, a warp
or more per unit of work by its size is common, as in the CSR-Adaptive sparse matrix-vector
product~\cite{greathouse2014}, and our per-level rule is of this kind. Our hi--lo pairs use
float-float arithmetic on GPUs~\cite{dagraca2006}. Floating-point filters for robust
predicates go back to Shewchuk~\cite{shewchuk1997}, and Bartels, Fisikopoulos and
Weiser~\cite{bartels2023} derive static and semi-static filters. Our filter compares its
bound by exponents only and falls back to double arithmetic on local coordinates
(Appendix~\ref{app:filter}).

\section{Conclusion}
\label{sec:conclusion}
The fit of isotonic regression can be computed in parallel within a row on a GPU by merging
lower hulls. Segment lists lower the cost of data movement in that merging: they remove the
copying of surviving vertices, which dominates on dense hulls and matters little on sparse
ones. The resulting operator is fast enough to change the cost of a training step that uses a
differentiable soft rank.

\paragraph{Limitations.}
The operator takes a rectangular batch; rows of unequal length are padded or laid end to end.
The merge levels follow a fixed schedule, which is far from the best on some data, and on
mid-sized rows they cost more in launches than in work. The executed code adds sequential rounds of
descriptor copying to the depth of the model (Proposition~\ref{prop:merge-code}). The predicates are filtered and
their second evaluation is double arithmetic, so accuracy and monotone output are measured
properties, and near ties the gradient follows the solver's partition.

\bibliographystyle{ACM-Reference-Format}
\bibliography{refs}

\appendix

\section{The three-case bridge search}
\label{app:three}
The nested search of Section~\ref{sec:analysis-merge} tests a candidate vertex of one chain
against its tangent to the other, a bisection inside a bisection. The search below halves a
candidate range on one of the two chains at every step instead.

\begin{lemma}[three cases, lower hulls]
\label{lem:threecases}
Let $A$ and $B$ be the hulls of two neighbouring nodes without their shared point $s$, so
that $x(a) < x(s) < x(b)$ for $a \in A$, $b \in B$, and suppose $s$ is not a vertex of
$H(A \cup B)$. Let $(a^*, b^*)$ be the bridge, with $a^*$ the leftmost and $b^*$ the rightmost
point of contact of its line. For an edge $\alpha$ of $A$, an edge $\beta$ of $B$, and points
$l \in \alpha$, $r \in \beta$:
(1) if $\mathrm{slope}(\alpha) \ge \mathrm{slope}(lr)$ then $a^*$ is not right of
$\alpha$'s left end point;
(2) if $\mathrm{slope}(lr) \ge \mathrm{slope}(\beta)$ then $b^*$ is not left of
$\beta$'s right end point;
(3) otherwise the lines of $\alpha$ and $\beta$ meet at a point $\gamma$ with
$x(l) < x(\gamma) < x(r)$, and if $x(\gamma) \le x(s)$ then $a^*$ is not left of $\alpha$'s
right end point, while if $x(\gamma) \ge x(s)$ then $b^*$ is not right of $\beta$'s left end
point.
\end{lemma}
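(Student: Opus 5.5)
The plan is to reduce every case to two facts about the bridge line $\lambda$ through $a^*$ and $b^*$. First, every vertex of $A \cup B$ lies on or above $\lambda$. Second, $\lambda$ supports $H(A)$ at $a^*$ and $H(B)$ at $b^*$. Since $A$ and $B$ are lower convex chains, their slopes are monotone. Every edge of $A$ strictly left of $a^*$ has slope below $\sigma = \mathrm{slope}(\lambda)$, and every edge strictly right of $a^*$ has slope at least $\sigma$; with $a^*$ leftmost, edges to its left are strictly below. Symmetrically, every edge of $B$ strictly right of $b^*$ has slope above $\sigma$, and edges left of $b^*$ are at most $\sigma$. Lemma~\ref{lem:merge} supplies the bridge and the property that the dropped vertices lie on or above it. The hypothesis that $s$ is not a vertex guarantees that $a^* \in A$ and $b^* \in B$ are well defined, with $x(a^*) < x(s) < x(b^*)$.

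For case (1), suppose $a^*$ is strictly right of $\alpha$'s left end point $\alpha_0$. Then $\alpha$ lies on or left of $a^*$ along the chain, so $\mathrm{slope}(\alpha) \le \sigma$, with strict inequality when $\alpha$ ends at or before $a^*$ and $a^*$ is leftmost. I will compare $\mathrm{slope}(lr)$ with $\sigma$. The point $l$ lies on $\alpha$, weakly above $\lambda$, and left of $a^*$. The point $r$ lies on $\beta$, weakly above $\lambda$. Both sit on the hull of $A \cup B$'s points' convex lower envelope, so they are on or above $\lambda$.

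Here the direct argument needs more care. I would instead use convexity of the merged lower envelope $E = H(A \cup B)$. Write $l$ and $r$ as lying on or above $E$, with $l$ on the segment $\alpha$. Since $\alpha$ is left of $a^*$, the part of $\alpha$ from $l$ onward lies on $E$. Hence the chord from $l$ to any point on or above $E$ further right has slope at least the slope of $E$ at $l^+$, which is $\mathrm{slope}(\alpha)$, with strict inequality when $r$ is strictly above the extension of $\alpha$. The key point is that $r$ is strictly above the line of $\alpha$: that line passes below $\lambda$ to the right of $a^*$ (its slope is less than $\sigma$ and it touches $\lambda$ at most at $a^*$), while $r$ is on or above $\lambda$. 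This yields $\mathrm{slope}(lr) > \mathrm{slope}(\alpha)$, contradicting the hypothesis. Case (2) is the mirror image under the reflection $x \mapsto -x$, which swaps $A$ and $B$ and preserves lower hulls.

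For case (3), the hypotheses give $\mathrm{slope}(\alpha) < \mathrm{slope}(lr) < \mathrm{slope}(\beta)$. The line through $l$ with slope $\mathrm{slope}(\alpha)$ and the line through $r$ with slope $\mathrm{slope}(\beta)$ then meet at $\gamma$ with $x(l) < x(\gamma) < x(r)$, by a one-line computation: the lines cross where their difference changes sign, and that difference has opposite signs at $x(l)$ and $x(r)$. Suppose $x(\gamma) \le x(s)$ and $a^*$ is strictly left of $\alpha$'s right end point $\alpha_1$. Then $\alpha$ lies at or right of $a^*$ on $A$, so $\mathrm{slope}(\alpha) \ge \sigma$, and the line of $\alpha$ lies weakly above $\lambda$ for $x \ge x(\alpha_0)$, where $\alpha_0$ is on or above $\lambda$. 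By the same reasoning, $b^*$ is right of $s$, so the portion of $B$'s line of $\beta$ near $\gamma$ must be compared with $\lambda$. I will argue as follows. The point $\gamma$ lies on the line of $\alpha$ at $x(\gamma) \le x(s) < x(b^*)$. The line of $\beta$ is a supporting line of $H(B)$, so all of $B$ lies on or above it, in particular $b^*$. Combining these, $\lambda$ at $b^*$ is on or above the line of $\beta$, and $\lambda$ at $\gamma$ is on or below the line of $\alpha$, which equals the line of $\beta$ at $\gamma$. This forces $\sigma \ge \mathrm{slope}(\beta) > \mathrm{slope}(\alpha) \ge \sigma$, a contradiction. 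The symmetric half follows by reflection.

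I expect the main obstacle to be the strictness bookkeeping with collinear contacts. Handling it is why $a^*$ is taken leftmost and $b^*$ rightmost, and why cases (1)–(2) use $\ge$ while (3) splits at $x(s)$ with both inequalities non-strict. I would check these edge situations first, with $l$ or $r$ at an end point of its edge and $\gamma$ exactly at $x(s)$.
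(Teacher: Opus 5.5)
Your proof is correct and is essentially the paper's. Both argue each case by contradiction, using that the bridge line supports both chains and that slopes increase along $A$ and $B$. Your case (3) compares $\lambda$ with the line of $\beta$ at $x(\gamma)$ and at $x(b^*)$, which is the paper's chain of inequalities at $x(b^*)$ stated as a slope comparison. The one difference is in (1): you use leftmostness up front to get $\mathrm{slope}(\alpha) < \sigma$ and so obtain strictness directly. That step needs the line of $\alpha$ to pass on or below $a^*$, which holds because the line of an edge supports the convex chain $A$; say so where you claim it meets $\lambda$ at or left of $a^*$. The paper instead first derives $\mathrm{slope}(lr) \ge \mathrm{slope}(\alpha)$ and then excludes equality, since equality would make the hull linear across $a^*$.
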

\begin{proof}
Let $h$ be the convex function whose graph is $H(A \cup B)$ and $\ell$ the line of the bridge,
which supports $A$ and $B$ from below.
(1) Suppose $a^*$ lies right of $\alpha$'s left end point; as $a^*$ is a vertex, $\alpha$ is
then an edge of $h$ left of $a^*$, so the line of $\alpha$ supports $h$ and $r_y \ge
h(x(r)) \ge l_y + \mathrm{slope}(\alpha)(x(r) - x(l))$, i.e.\ $\mathrm{slope}(lr) \ge
\mathrm{slope}(\alpha)$. With the hypothesis this is an equality, $r$ lies on the line of
$\alpha$, $h$ is linear on $[x(l), x(r)] \ni x(a^*)$, and $a^*$ is not the leftmost point of
contact, a contradiction. (2) is symmetric.
(3) The slopes differ, so the lines meet. From $\mathrm{slope}(lr) > \mathrm{slope}(\alpha)$
the point $r$ lies strictly above the line of $\alpha$ and on that of $\beta$, which is below
the line of $\alpha$ left of $\gamma$; so $x(r) > x(\gamma)$, and symmetrically $x(l) <
x(\gamma)$. Let $x(\gamma) \le x(s)$ and suppose $a^*$ is not right of $\alpha$'s left end
point. Then $\alpha$ is not before the edge out of $a^*$ in $A$, so $\mathrm{slope}(\ell) \le
\mathrm{slope}(\alpha)$, and $\ell$ is not above the line of $\alpha$ at $\alpha$'s left end
point, hence nowhere to its right. At $x(b^*) > x(s) \ge x(\gamma)$ the line of $\alpha$ is
strictly below the line of $\beta$, which is not above $B$: $b^*_y = \ell(x(b^*)) <
\beta\text{-line}(x(b^*)) \le b^*_y$, a contradiction. The case $x(\gamma) \ge x(s)$ is
symmetric.
\end{proof}

Each case discards a prefix or suffix of one chain, so the search keeps a contiguous
candidate range on each side and halves one of them per step. On segment lists the halving
is done in two levels: while a range spans several segments, $\alpha$ is the edge between
the middle segment's last vertex and the next segment's first (an edge of the chain, as
consecutive vertices of the hull), and a case keeps either the segments up to the middle one
or those after it; within one segment the vertices are consecutive slots and $\alpha$ is the
middle edge. A side of $k$ segments is thus reduced to a vertex in $\log_2 k + \log_2 C +
O(1)$ steps, each reading two vertices and two descriptors, and the remaining tangent from
that vertex to the other side is the same bisection with a turn predicate. The predicates are
signs of polynomials of degree two (slope comparisons) and three ($x(\gamma)$ against
$x(s)$) in the coordinates; any interior points $l$, $r$ serve, so the edge midpoints are
used.

\begin{theorem}[merge phase, three-case search]
\label{thm:merge}
With every walk capped at a constant $T$ steps and the search above beyond it, and with a
parallel copy for the descriptor moves, the merge phase over $m = \lceil n/C \rceil$ chunk
hulls does $O(n + m \log m)$ work at depth $O(T \log n + \log^2 n)$; the work is $O(n)$
whenever $C \ge \log_2 m$.
\end{theorem}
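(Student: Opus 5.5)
The proof follows that of Theorem~\ref{thm:merge-impl} and replaces only the cost of a search; the walks and the descriptor moves are unchanged.

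\emph{Walks.} Capped walks take at most $3H + 2m = O(n)$ steps in total by Lemma~\ref{lem:walk}, since a capped walk's steps are a subset of an uncapped walk's.

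\emph{Descriptor moves.} As before, these cost $O(m)$ per level and so $O(m \log m)$ work overall. With a parallel copy they take $O(1)$ depth per merge.

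\emph{Searches.} First I would establish correctness and the step count from Lemma~\ref{lem:threecases}. Consider a merge at level $r$ in which $s$ is not a vertex of the merged hull; this is the case where a search is invoked. Each step picks an edge $\alpha$ in the middle of $A$'s candidate range and an edge $\beta$ in the middle of $B$'s. Cases (1)--(3) each discard a half of one range while keeping $a^*$ or $b^*$ inside its range, so an invariant argument shows the search terminates at the bridge.

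A chain side of a level-$r$ node has at most $2^r$ segments. The two-level halving of the text therefore reduces it to a single vertex in $r + \log_2 C + O(1)$ steps. Every step halves one of the two ranges, so the total number of steps before both are reduced is at most the sum for the two sides, $O(r + \log C)$. If one side is reduced first, the remaining tangent from that vertex to the other side is a single bisection, again $O(r + \log C)$ steps. Each step reads $O(1)$ vertices and descriptors and evaluates predicates of constant degree, so it costs $O(1)$.

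A search thus costs $O(r + \log C)$ work and depth, instead of the $O((r+\log C)^2)$ of the nested search.

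\emph{Summing.} Level $r$ has at most $m/2^{r+1} + 1$ merges, each with at most one search. The total search work is
\[
\sum_r \bigl(m/2^{r+1}+1\bigr)\,O(r+\log C) = O(m\log C + \log^2 m),
\]
which is $O(n)$ since $m \log C = n\log C/C$. Together with the walks and the moves, the work is $O(n + m\log m)$.

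The depth of level $r$ is $T + O(r + \log C)$ for the capped walk, the search and the constant-depth copy. Summing over the $\lceil\log_2 m\rceil$ levels gives $O(T\log m + \log^2 m + \log m\log C)$, which is $O(T\log n + \log^2 n)$.

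For the last clause: if $C \ge \log_2 m$, then $m\log m \le mC = O(n)$, so the work is $O(n)$.

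The step I expect to be the main obstacle is the correctness of the two-level halving on segment lists. Lemma~\ref{lem:threecases} is stated for edges of the chains, so two points need care.
\begin{itemize}
\item The inter-segment edge $\alpha$, from the middle segment's last vertex to the next segment's first, must be a genuine edge of the child's hull. This holds by Theorem~\ref{thm:segments}, since consecutive entries of the concatenation are consecutive hull vertices.
\item A case must discard whole segments consistently with "not right of $\alpha$'s left end point". For this I would map vertex order to (segment, position) order, which is monotone because chunk indices increase along the list.
\end{itemize}
I would also handle two degenerate situations: when a range has shrunk to a single vertex, so that no edge remains, the search switches to the tangent bisection; and the leftmost and rightmost contact points must be treated correctly in the presence of collinear points. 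Both would be argued through the lemma's choice of $a^*$ and $b^*$.
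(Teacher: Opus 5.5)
Your proposal is correct and follows the paper's proof: the walks and descriptor moves are costed as in Theorem~\ref{thm:merge-impl}, each three-case search takes $O(r + \log C)$ steps at level $r$, and the work and depth sums and the bound $m \log m \le mC$ for the last clause are the same. Your added care about the two-level halving on segment lists goes beyond the paper's proof, which leaves that point to the discussion preceding the theorem.
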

\begin{proof}
The walks and the moves cost as in Theorem~\ref{thm:merge-impl}. A search at level $r$ takes
at most $2(r + \log_2 C) + O(1)$ steps, so the searches cost
$\sum_r (m / 2^{r+1} + 1)\, O(r + \log C) = O(m \log C + \log^2 m) = O(n)$. The moves cost
$O(m \log m)$, which is $O(n)$ when $C \ge \log_2 m$. The depth of level $r$ is at most
$T + 2(r + \log_2 C) + O(1)$, and the sum over $r \le \log_2 m$ is
$O(T \log m + \log^2 m + \log m \log C) = O(T \log n + \log^2 n)$.
\end{proof}

Theorem~\ref{thm:merge} bounds a variant of the code. We implemented the three-case search
for the levels run with a thread per merge, with a walk capped at $8(r + \log_2 C)$ steps. The
levels run with a warp per merge keep the warp-parallel nested search, so the variant does not
reach the bound at those levels, and its times show the cost of the search where it is used. On
an A100 the variant and the default take the same time when bridges reach across the whole row,
with a last point far below the rest: $1.41$ against $1.37$~ms for the merges of $10^8$
elements. On noisy data the variant's merges are twice as slow ($1.16$ against $0.58$~ms): few
walks reach the cap there, and the variant's larger kernel lowers occupancy.

\section{The filters of the predicates}
\label{app:filter}
\paragraph{Merge levels.}
The merge levels decide turns $\mathrm{turn}(a, b, d) = (b_x - a_x)(d_y - a_y) - (b_y -
a_y)(d_x - a_x)$ of vertices addressed by segment and slot. A coordinate of a vertex of chunk
$j$ is represented as $X = c_j + c'_j + h$. Here $c_j$ and $c'_j$ are the hi and lo parts of
the chunk's start, a renormalized double-double pair with $|c'_j| \le u |c_j|$, $h$ is the
vertex's coordinate relative to the chunk, a double, and $u = 2^{-53}$. The slow path computes
the turn from differences of these parts (Section~\ref{sec:implementation}). The fast path
uses absolute coordinates $\hat X = \mathrm{fl}(c_j + h)$, one addition each. It computes
$b'_x = \mathrm{fl}(\hat b_x - \hat a_x)$ and likewise $b'_y, d'_x, d'_y$, then
$p_1 = \mathrm{fl}(b'_x d'_y)$, $p_2 = \mathrm{fl}(b'_y d'_x)$ and $c = \mathrm{fl}(p_1 - p_2)$.

\paragraph{Error.}
Let $M$ be at least the magnitude of each of the six absolute coordinates and of each chunk
start $c_j$ of the three points. Then $|\hat X - X| \le |c'_j| + u |\hat X| \le 2uM$ to first
order. The chunk starts matter for the ordinates: a cumulative sum can be far from zero at a
chunk's start and return close to zero within the chunk, so that $|c_j|$ exceeds the
coordinate. For the abscissae, sums of positive weights, the chunk start never exceeds the
coordinate. With $\Sigma = |b'_x| + |b'_y| + |d'_x| + |d'_y|$, the errors of the differences
are at most $4uM + u|b'_x|$ and so on. Those of the products are at most
$4uM(|b'_x| + |d'_y|) + 3u|p_1|$ and $4uM(|b'_y| + |d'_x|) + 3u|p_2|$, and that of $c$ is at
most
\begin{equation}
  4uM\Sigma + 3u(|p_1| + |p_2|) + u|c| \le 4u\,(M\Sigma + |p_1| + |p_2|)
  \label{eq:filter-error}
\end{equation}
to first order, as $|c| \le (1 + u)(|p_1| + |p_2|)$. The code uses twice this,
$E = 8u\,(M\Sigma + |p_1| + |p_2|)$, which absorbs the second-order terms. If $|c| > E$, then $c$
has the sign of the turn of the represented points.

\paragraph{The test by exponents.}
For a normal $v \neq 0$ let $e(v) = \lfloor \log_2 |v| \rfloor$, so that $2^{e(v)} \le |v| <
2^{e(v) + 1}$. Let $e_a$ be the larger exponent of $\hat a$'s two coordinates, $e_s$ the largest
exponent of the ordinates' chunk starts of the three points, $e_d$ the largest exponent of the
four differences, and $e_m = \max(e_a, e_s, e_d) + 1$. Every coordinate of $\hat b$ and
$\hat d$ is at most $|\hat a| + |\text{difference}| < 2^{e_a + 1} + 2^{e_d + 1} \le
2^{e_m + 1}$, and every chunk start is below $2^{e_s + 1} \le 2^{e_m}$, so $M = 2^{e_m + 1}$
qualifies. Further, $\Sigma < 4 \cdot 2^{e_d + 1} = 2^{e_d + 3}$ and
$|p_1| + |p_2| \le \max(|d'_x|, |d'_y|)(|b'_x| + |b'_y|) < 2^{e_d + 1}\Sigma \le 2^{e_m}\Sigma$.
Hence
\begin{equation}
  E < 8u\,(2^{e_m + 1} + 2^{e_m})\,\Sigma < 2^{-50} \cdot 2^{e_m + 2} \cdot 2^{e_d + 3}
  = 2^{e_m + e_d - 45}.
\end{equation}
Since $|c| \ge 2^{e(c)}$, the condition
\begin{equation}
  e(c) \ge e_m + e_d - 44
  \label{eq:filter-test}
\end{equation}
implies $|c| \ge 2^{e_m + e_d - 44} > E$, and the fast result is returned. The code reads the
biased exponents $\tilde e = e + 1023$ from the high word of each double with an integer shift
and mask. In these, \eqref{eq:filter-test} reads $\tilde e(c) \ge \tilde e_m + \tilde e_d - 1067$
with $\tilde e_m = \max(\tilde e_a, \tilde e_s, \tilde e_d) + 1$, which takes integer maxima and
one comparison. The derivation assumes normal numbers, so a zero or subnormal $c$ or
difference, whose biased exponent is $0$, always takes the slow path. Replacing each quantity
by a power of two makes the test coarser than $|c| > E$ by a constant factor, which sends a few
more nearly collinear triples to the slow path.

\paragraph{What it certifies.}
A turn that passes the test has the sign that exact arithmetic gives on the represented points,
the inputs of the slow path. A turn that fails it is computed by the slow path, in double
arithmetic on differences at the scale of the chunks' sums, which can still err for nearly
collinear points. The represented points are themselves not the exact cumulative sums: they
carry the rounding of the chunks' sums and, in float32, of the products $w_i y_i$. The
predicates are therefore not exact, and the accuracy of Section~\ref{sec:numerics} is a
measurement.

\paragraph{Lane and chunk levels.}
Here the coordinates are pairs of the working type, with unit roundoff $u$, and the fast path
uses their hi parts. Within a lane, a coordinate is a renormalized pair whose hi part is its
value rounded once, so it is within $u$ times the largest magnitude of the lane's sums, the
bound used by the lane chain. Across the lanes of a chunk, a coordinate is the sum of a lane's
start and a sum within the lane, both pairs, and its hi part is the rounded sum of their hi
parts. It is within $u(|y_h| + |s| + |t|) \le 2u\,m$ of the pair's value, where $s$ and $t$ are
the two terms and $m = |s| + |t|$; $m$ can far exceed $|y_h|$ when the terms cancel, and the
code carries it with each point. For the abscissae, sums of positive weights, $|x_h|$ bounds
the terms. With $M_y$ the largest $m$ of the three points and $M_x$ their largest $|x_h|$, the
cross product is within $4u\,(M_y(|x_1| + |x_2|) + M_x(|y_1| + |y_2|)) + 4u\,(|p_1| + |p_2|)$ to
first order. The code uses $12u$ and $8u$ for the two coefficients. The bound is evaluated in the
working type, as these levels run at the single-precision rate, and a turn within it is
recomputed from the pairs with two-sum and two-product.

\section{Gradient validation}
\label{app:grad}
\paragraph{Tests.}
The test suite checks the gradients against finite differences on the CPU, in $y$ and in $w$,
in float64. On the GPU it checks them against the CPU's on data without near-ties, and against
block sums over the GPU's own block starts on every data family, sawtooth and constant data
included. Separately, it checks blocks whose means differ but round to the same float32 value:
they stay separate blocks.

\paragraph{Soft rank.}
For the soft rank of Section~\ref{sec:e2e} we compare the solvers with a float64 solution of
the same float32 problem $z = s - w$, on each of the four GPUs; the inputs are drawn on the GPU,
so they differ between GPUs. The values of \sysname{} agree with it to $5 \cdot 10^{-8}$ of their
scale. Those of torchsort agree to $6 \cdot 10^{-8}$ with $\varepsilon = 1/n$; with
$\varepsilon = 1$, where each row is one long block whose mean torchsort accumulates in float32,
they deviate by $6 \cdot 10^{-7}$ at $n = 100$, by up to $7 \cdot 10^{-6}$ for $n$ up to $10^5$,
and by $2.5 \cdot 10^{-4}$ at $n = 10^6$. For the gradients we compare the vector-Jacobian product of the isotonic step, $A g$
for a standard normal $g$, where $A$ averages over the blocks. We report its largest error
relative to the reference's largest entry ($\ell_\infty$; the reference's entries reach $4.3$ to
$5.0$ with $\varepsilon = 1/n$) and its error in the $\ell_2$ norm relative to the reference's.
Since $A g$ is constant on each block and differs between blocks, its changes mark the blocks each
solver used.

The gradients of \sysname{} agree with the reference to $5.2 \cdot 10^{-7}$ in $\ell_\infty$,
with the same blocks, except on the row of $10^6$ with $\varepsilon = 1/n$ on the T4 and the A100.
There one of the reference's $712\,735$ and $712\,368$ block boundaries differs, and the means on
its two sides are exactly equal: the fit is not differentiable at that point, and each partition
gives a valid one-sided derivative. That boundary accounts for an $\ell_\infty$ difference of
$0.21$ and $0.19$ and an $\ell_2$ difference of $1.4 \cdot 10^{-3}$ and $1.2 \cdot 10^{-3}$.
torchsort's gradients agree to $1.3 \cdot 10^{-5}$ with $\varepsilon = 1$ and for $n \le 10^3$.
With $\varepsilon = 1/n$ and $n \ge 10^4$, 7 to 1\,425 of its block boundaries differ from the
reference's. At each of them the means on the two sides differ, by up to $5 \cdot 10^{-8}$ of the
data's scale, which is below float32's resolution. These differences are therefore not explained
by equal means of neighbouring blocks; we compared only those means, not the stability of the
partition within the blocks. torchsort's gradient then differs by $0.42$ to $0.76$ in $\ell_\infty$
and by $0.5$ to $4.6\%$ in $\ell_2$.

\section{Calibration metrics}
\label{app:calibration}
\begin{table}[h]
\caption{Calibration of the click model of Section~\ref{sec:case} on day 7: expected
calibration error (ECE) and the change in log-loss, with 95\% intervals paired over the rows
of day 7. One isotonic curve is compared with Platt scaling, one curve per segment with one
isotonic curve on the segments' rows. The model's AUC is $0.767$; uncalibrated, it predicts a
mean click probability of $0.67$ for a click rate of $0.26$.}
\label{tab:calibration}
\small
\setlength{\tabcolsep}{3pt}
\begin{tabular}{@{}lrr@{}}
\toprule
calibration & ECE & $\Delta$ log-loss \\
\midrule
none & 0.41 & \\
downsampling correction & 0.032 & \\
Platt scaling~\cite{platt1999} & 0.0080 & \\
isotonic, one curve & 0.0015 & $-5.0$ ($-5.3$ to $-4.7$) $\cdot 10^{-4}$ \\
\midrule
segments (60\% of rows), one curve & 0.0023 & \\
one curve per segment & 0.0026 & $-1.24$ ($-1.42$ to $-1.07$) $\cdot 10^{-3}$ \\
\bottomrule
\end{tabular}
\end{table}

\end{document}